\DeclareMathAlphabet\mathbfcal{OMS}{cmsy}{b}{n}
\newcommand{\vast}{\bBigg@{4}}
\theoremstyle{definition}
\theoremstyle{remark}
\newtheorem{prop}{Proposition}
\newtheorem{lem}{Lemma}
\g@addto@macro\th@remark{\thm@headpunct{\normalfont:}}
\newcommand{\distas}[1]{\mathbin{\overset{#1}{\kern\z@\sim}}}%
\newsavebox{\mybox}\newsavebox{\mysim}
\newcommand{\distras}[1]{%
  \savebox{\mybox}{\hbox{\kern3pt$\scriptstyle#1$\kern3pt}}%
  \savebox{\mysim}{\hbox{$\sim$}}%
  \mathbin{\overset{#1}{\kern\z@\resizebox{\wd\mybox}{\ht\mysim}{$\sim$}}}%
}
\begin{document}

\title{Green Cognitive Relaying: Opportunistically Switching Between Data Transmission and Energy Harvesting}

\author{Nikolaos~I.~Miridakis, Theodoros~A.~Tsiftsis,~\IEEEmembership{Senior Member,~IEEE},\\ George C. Alexandropoulos,~\IEEEmembership{Senior Member,~IEEE}, and\\ M\'erouane Debbah,~\IEEEmembership{Fellow,~IEEE}
\thanks{N. I. Miridakis is with the Department of Computer Engineering, Piraeus University of Applied Sciences, 12244 Aegaleo, Greece (e-mail: nikozm@unipi.gr).}
\thanks{T. A. Tsiftsis is with the Department of Electrical Engineering, Technological Educational Institute of Central Greece, 35100 Lamia, Greece (e-mail: tsiftsis@teiste.gr).}
\thanks{G. C. Alexandropoulos and M. Debbah are with the Mathematical and Algorithmic Sciences Lab, France Research Center, Huawei Technologies Co. Ltd., 92100 Boulogne-Billancourt, France (e-mails: \{george.alexandropoulos,merouane.debbah\}@huawei.com).}
}

\markboth{}%
{}

\maketitle
\begin{abstract}
Energy efficiency has become an encouragement, and more than this, a requisite for the design of next-generation wireless communications standards. In current work, a dual-hop cognitive (secondary) relaying system is considered, incorporating multiple amplify-and-forward relays, a rather cost-effective solution. First, the secondary relays sense the wireless channel, scanning for a primary network activity, and then convey their reports to a secondary base station (SBS). Afterwards, the SBS, based on these reports and its own estimation, decides cooperatively the presence of primary transmission or not. In the former scenario, all the secondary nodes start to harvest energy from the transmission of primary node(s). In the latter scenario, the system initiates secondary communication via a best relay selection policy. Performance evaluation of this system is thoroughly investigated, by assuming realistic channel conditions, i.e., non-identical link-distances, Rayleigh fading, and outdated channel estimation. The detection and outage probabilities as well as the average harvested energy are derived as new closed-form expressions. In addition, an energy efficiency optimization problem is analytically formulated and solved, while a necessary condition in terms of power consumption minimization for each secondary node is presented. From a green communications standpoint, it turns out that energy harvesting greatly enhances the resources of secondary nodes, especially when primary activity is densely present.
\end{abstract}

\begin{IEEEkeywords}
Cognitive relaying systems, cooperative spectrum sensing, detection probability, energy efficiency, energy harvesting, green communications.
\end{IEEEkeywords}

\IEEEpeerreviewmaketitle

\section{Introduction}
\IEEEPARstart{R}{ecently}, cognitive radio (CR) has emerged as one of the most promising technologies to resolve the issue of spectrum scarcity, caused by the escalating growth in wireless data traffic of next-generation networks \cite{ref20,ref24}. One of the principal requirements of CR is the effectiveness of spectrum sharing performed by secondary (unlicensed) nodes, which is expected to intelligently mitigate any harmful interference caused to the primary (licensed) network nodes. This requirement is directly related to the accuracy of spectrum sensing/sharing techniques, reflecting the reliable detection of primary transmission. Moreover, to further guarantee a sufficient quality level of primary communication, the transmission power of CR is generally limited, such that its interference onto primary users remains below prescribed tolerable levels. However, this dictated constraint dramatically affects the coverage and/or capacity of the secondary communication. Such a condition can be effectively counteracted with the assistance of wireless relaying transmission. In particular, the rather feasible dual-hop multi-relay communication scheme with best relay selection is of paramount interest due to its enhanced performance gains (e.g., see \cite{ref8}-\cite{refextra3} and references therein).

Building on the aforementioned system deployment, the spectrum sensing process can also be significantly enhanced. By means of the so-called cooperative sensing (e.g., see \cite{refextrraa,ref26}), each secondary node may sense the channel in fixed time periods and then forward its sensing measurement to a central secondary base station (SBS), which acts as a fusion center. The latter entity is responsible for the final decision on a primary transmission occurrence, benefiting from the spatial diversity of several sensing reports. Such a distributed (cooperative) spectrum sensing was shown to deliver much more accurate decisions than local (standalone) sensing regarding the detection of primary transmissions \cite{ref26}.

On another front, driven by the ever increasing economical and environmental (e.g., carbon footprint) costs associated with the operating expenditure of communication networks, energy efficiency (EE) has become an important design consideration in current and forthcoming wireless cognitive infrastructures \cite{ref21}-\cite{ref23}. Taking into account that nodes in wireless communication systems are mostly battery-driven while governed by power-hungry equipment, EE appears to be an essential requirement.

\subsection{Related Work and Motivation}
In cooperative CR systems, two different types of relaying protocols have dominated so far, namely, amplify-and-forward (AF) and decode-and-forward (DF) \cite{dohlerbook}. It is noteworthy that AF outperforms the computational-demanding DF in terms of EE and/or power savings. In \cite{ref27} and \cite{ref28}, the tradeoff between consumed power and performance of the secondary system was investigated. However, in these works, the objective was performance enhancement (in terms of either outage probability \cite{ref27} or throughput \cite{ref28}) not EE and/or minimization of power consumption. In \cite{ref29}-\cite{ref31}, the authors focused on the energy minimization of cognitive relaying networks. Yet, EE and data transmission requirements (e.g., in terms of a data rate and/or error rate target) were not jointly considered in these works. In \cite{ref22}, the latter problem was jointly considered, given a predetermined detection probability on the primary nodes activity (i.e., providing conditional expressions). A recent approach dealing efficiently with power savings and minimization of consumed energy is energy harvesting \cite{procieee}. Powering mobile devices by harvested energy from ambient sources and/or external transmission activities renders wireless networks not only environmentally friendly but also self-sustaining. From the CR perspective, \cite{ref32,ref33} investigated cognitive systems enabled with energy harvesting equipment, but cooperative spectrum sensing or relayed transmission was not considered. 

In this paper, we focus on the EE optimization of CR cooperative systems, while an opportunistic strategy incorporating energy harvesting is introduced. According to the proposed strategy, secondary nodes switch between data transmission and energy harvesting depending on their sensing decision on the existence of primary nodes activity. More specifically, a dual-hop relaying system with multiple relays is adopted for the secondary system, where the end-to-end ($e2e$) communication is facilitated via a best relay selection policy. AF relaying protocol is implemented via semi-blind relays and fixed gains, a rather cost-efficient solution. Cooperative spectrum sensing is performed in a fixed sensing time duration, followed by a reporting of relay sensing measurements to SBS. Upon the aggregate decision at SBS, secondary nodes enter into either the energy harvesting phase (if primary transmission is detected) or the transmission phase (if primary transmission is not detected). In order to avoid unexpected co-channel interference to primary nodes (e.g., during reporting, transmitting and backhaul signaling from SBS to relays), the average interference temperature constraint is considered.

\subsection{Contributions}
The main contributions of this paper are summarized as follows:
\begin{itemize}
	\item A novel hybrid mode of operation is introduced for the CR cooperative system, where the secondary nodes opportunistically switch between data transmission and energy harvesting, upon the detection of primary activity. 
	\item A new exact closed-form expression for the detection probability of the considered configuration is presented, assuming that all the involved channels undergo independent and non-identically distributed (i.n.i.d.) Rayleigh fading conditions (i.e., different link distances among the nodes, appropriate for practical applications).
	\item A new exact formula for an important system performance metric is derived, namely, the outage probability, when the secondary system enters into the data transmission phase. The practical scenario of outdated channel state information (CSI), between the reporting and transmission phases, is also considered. Moreover, an exact formula for the average harvested power is presented, when the secondary system enters into the corresponding phase. 
	\item Based on the aforementioned performance analysis, the overall energy consumption at each secondary node is formulated. An optimization problem aiming at minimizing this energy is introduced, whereas a necessary and sufficient optimality condition is manifested. Finally, based on the enclosed analysis, the optimal sensing time is numerically evaluated given the required constraints.   
\end{itemize}

\subsection{Organization of the Paper}
In Section \ref{Section II} the signal model and the proposed opportunistic strategy for energy efficient cognitive relaying systems is presented. Subsequently, closed-form expressions for key performance measures of the secondary system are included in Section \ref{Section III}. Based on these expressions, the energy consumed from each secondary node is analyzed in Section \ref{Section IV}, followed by the introduction of an optimization problem on minimizing the overall energy consumption of the CR cooperative system. Indicative numerical results and useful discussions are presented in Section \ref{Section V}, while some concluding remarks are presented in Section \ref{Section VI}.  

\subsection{Notation}
Throughout this paper, the following notations are used: $\mathbb{E}[\cdot]$ stands for the expectation operator and $\text{Pr}[\cdot]$ returns probability. Also, $f_{X}(\cdot)$ and $F_{X}(\cdot)$ denote, respectively, probability density function (PDF) and cumulative distribution function (CDF) of random variable (RV) $X$. Furthermore, $\Gamma(\cdot,\cdot)$ represents the upper incomplete Gamma function \cite[Eq. (8.350.2)]{tables}, $\text{Ei}(\cdot)$ is the exponential integral \cite[Eq. (8.211.1)]{tables}, $J_{0}(\cdot)$ denotes the zeroth order Bessel function of the first kind \cite[Eq. (8.411)]{tables}, $I_{0}(\cdot)$ is the zeroth order modified Bessel function of the first kind \cite[Eq. (8.431)]{tables}, and $K_{n}(\cdot)$ is the $n$th order modified Bessel function of the second kind \cite[Eq. (8.446)]{tables}.

\section{System Model}
\label{Section II}

\begin{figure}[!t]
\centering
\includegraphics[keepaspectratio,width=2.8in]{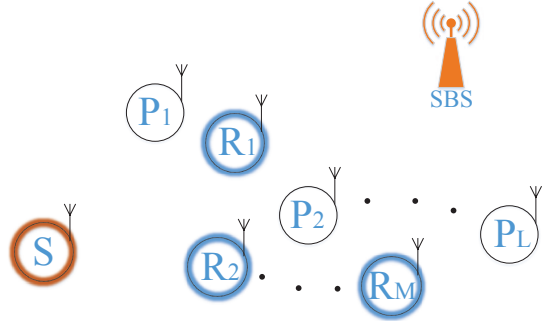}
\caption{The considered system configuration with a secondary source (S), $M$ relay (R) nodes and an SBS, i.e., a destination node (D), under the presence of $L$ primary (P) nodes.}
\label{fig1}
\end{figure}

\begin{figure}[!t]
\centering
\includegraphics[keepaspectratio,width=4.4in]{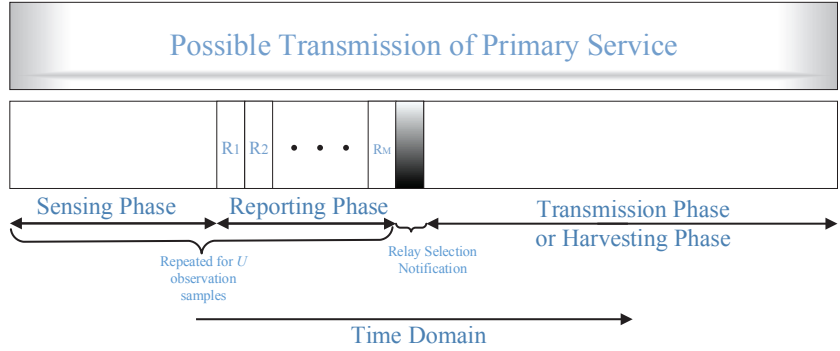}
\caption{Frame structure of the proposed approach.}
\label{fig2}
\end{figure}
A secondary (cognitive) dual-hop system consisted of a source ($S$) communicating with a destination ($D$) via $M$ relay ($R_{i}$ with $1\leq i\leq M$) nodes is considered.\footnote{Note that the terms $D$ and SBS will be interchangeably used in the rest of this paper.} Direct communication between source and destination is not available due to the long distance and strong propagation attenuation, while keeping in mind that in secondary systems the transmission power must in principle be maintained in quite low levels. The system operates in the vicinity of another licensed primary network, which consists of $L$ primary ($P_{j}$ with $1\leq j\leq L$) nodes (c.f., Fig. \ref{fig1}). In current study, we assume that all the involved signals are subject to i.n.i.d. Rayleigh fading as well as additive white Gaussian noise (AWGN) with a common power $N_{0}$. Thus, PDF of the instantaneous signal-to-noise ratio (SNR) is given by
\begin{align}
f_{\gamma_{i,j}}(x)=\frac{N_{0}\exp\left(-\frac{N_{0}x}{p_{i,j}\bar{\gamma}_{i,j}}\right)}{p_{i,j}\bar{\gamma}_{i,j}}, \ \ x\geq 0
\label{snrpdf}
\end{align}
where $\gamma_{i,j}$, $p_{i,j}$ and $\bar{\gamma}_{i,j}\triangleq d_{i,j}^{-\alpha_{i,j}}$ denote the instantaneous SNR, the signal power and the average received channel gain, respectively, from the $i$th to $j$th node. Moreover, $d_{i,j}$ and $\alpha_{i,j}$ represent the corresponding distance and path loss factor, respectively. Usually, $\alpha_{i,j} \in \{2,6\}$ denotes free-space loss to dense urban environmental conditions, correspondingly.

\subsection{Protocol Description}
The basic lines of the proposed approach are sketched in Fig. \ref{fig2}. The secondary nodes operate in a time division multiple access scheme, where sensing and transmission or harvesting phases are periodically alternating.

\subsubsection{Sensing Phase}
First, the relays and the destination enter into the \emph{sensing phase} where they listen to the presence of primary users' signals over the shared spectrum band within a fixed sensing duration $T_{S}$. We assume that the sensing time is smaller than the channel coherence time of primary transmission, such that the magnitude of the channel response remains constant within this phase. This is a reasonable assumption since the sensing duration is, in principle, maintained quite small (e.g., see \cite{ref16,ref17} and references therein). The received signal at the $i$th relay or destination can be expressed as
\begin{align}
y_{P,\mathcal{X}}=\sum^{L}_{j=1}\theta_{j}\sqrt{p_{p}}g_{P_{j},\mathcal{X}}s_{j}+n_{\mathcal{X}}, \ 1\leq i \leq M,\ \mathcal{X}\in\{R_{i},D\}
\label{recsignalsensing}
\end{align}
where $\theta_{j}=1$ or $0$ when the $j$th primary signal is present or absent, respectively. Also, $p_{p}$, $g_{P_{j},\mathcal{X}}$, $s_{j}$ and $n_{\mathcal{X}}$ denote the transmit power of primary nodes,\footnote{Without loss of generality and for the sake of clarity, a common power profile for the primary nodes is adopted.} the instantaneous channel gain from $P_{j}$ to $\mathcal{X}$, the transmitted data of the $j$th primary node and AWGN at $\mathcal{X}$, respectively. In particular, $\theta_{j}$ follows the Bernoulli probability mass function (PMF), such that $\theta_{j}=1$ or $\theta_{j}=0$ with probabilities $\text{Pr}[\theta_{j}=1]$ and $(1-\text{Pr}[\theta_{j}=1])$, respectively.

\subsubsection{Reporting Phase}
Next, each relay amplifies and forwards its local sensing measurement to the destination on its particular time slot (which is a priori reserved from the system), entering into the \emph{reporting phase}. Hence, the received signal at the destination, forwarded by the $i$th relay at its allocated time slot, yields as
\begin{align}
y_{R_{i},D}=\sqrt{p^{(\mathcal{R})}_{R_{i},D}}g_{R_{i},D}G_{\mathcal{R},i} y_{P,R_{i}}+n_{D}
\label{recsignaldestsensing}
\end{align}
where $p^{(\mathcal{R})}_{R_{i},D}$ and $g_{R_{i},D}$ are the transmission power and channel gain from $R_{i}$ to $D$, respectively, whereas $G_{\mathcal{R},i}$ denotes the fixed gain of the $i$th relay, all indicating the reporting phase.

In the sensing phase, each relay performs spectrum sensing on a per sample basis and then forwards its report to SBS. Thereby, defining that $U$ samples are provided within the sensing phase duration, the following procedure occurs: a) secondary relays sense the spectrum for sample-time $1$, then go to reporting phase for sample $1$; b) they sense the spectrum for sample-time $2$, then go to reporting phase for sample 2; and so on for $U$ samples. Notice that the latter alternation between sensing and reporting phases is a natural outcome of the AF relaying protocol's structure (receive and forward without further processing), whereas it further enhances the joint spatial-time diversity for the spectrum sensing process (since $U\geq 1$).    

\subsubsection{Harvesting Phase}
Then, the destination determines the presence of a primary transmission or not, according to the received signals' power. In fact, it compares the maximum from $M+1$ signals (from the relays and its own) with a predetermined power threshold value $\lambda$. In the case when this signal power is greater than $\lambda$, a detection event is declared in a subsequent time slot and all the relays initiate a harvesting phase, collecting energy from the occurring primary transmission(s). To this end, we have that
\begin{align}
P_{d}\triangleq \text{Pr}\left[\max_{u} \left\{\gamma_{P,D},\max_{i} \left\{\gamma^{(\mathcal{R})}_{e2e,i}\right\}^{M}_{i=1}\right\}^{U}_{u=1}\geq \lambda\right]
\label{maxsigdet}
\end{align}
where $P_{d}$ stands for the detection probability, while $\gamma^{(\mathcal{R})}_{e2e,i}$ represents the $e2e$ SNR at the sensing phase from the (potentially active) primary nodes to destination via the $i$th relay. For analytical tractability, we assume independence amongst the samples, which, nonetheless, represents a common yet efficient approach (e.g., see \cite{ref155555,ref1555555} and references therein). Moreover, the harvested power at the $i$th relay conditioned on a detection event, reads as \cite[Eq. (2)]{refharv}
\begin{align}
E_{H,i}=\eta P_{d}p_{p}\sum^{L}_{j=1}\theta_{j}\bar{\gamma}_{P_{j},R_{i}}\left|g_{P_{j},R_{i}}\right|^{2}
\label{energyharv}
\end{align}
where $\eta \in (0,1]$ is the radio frequency-to-direct current (RF-to-DC) conversion efficiency.

\subsubsection{Transmission Phase}
On the other hand, if the power of the signal in (\ref{maxsigdet}) is lower than $\lambda$, primary transmission is not detected with a probability $1-P_{d}$. In such a case, capitalizing on the status of channel gains from all the relay-to-destination links (collected from the aforementioned reporting phase), the destination selects the relay with the highest instantaneous SNR (i.e., $R_{s}$ with $s$ determined by the condition $\gamma_{R_{s}D}=\max_{l}\{\gamma_{R_{l},D}\}^{M}_{l=1}$) and broadcasts this information in the subsequent time slot. In turn, the selected relay informs the secondary source to enter into the transmission mode of operation.\footnote{The event of no signal returning from any relay back to the source, at this stage, can be interpreted as a triggering of harvesting phase for the secondary source.} Based on this call, the secondary system enters into the \emph{transmission phase}, while the source communicates with the destination via the selected relay (all the other ones stay idle\footnote{In current study, we focus on single-band (low-complex/low-cost) AF relay nodes. In addition, we assume that all active transmissions in the vicinity of the secondary system are realized by either the primary or the secondary nodes in that frequency band.}).

In fact, the classical half-duplex dual-hop AF relaying protocol is established at this stage, where the source-to-relay and relay-to-destination links occur in orthogonal transmission phases (e.g., in two consecutive time slots). Hence, the received signals at the relay and destination are, respectively, given by
\begin{align}
r_{S,R_{s}}=\sqrt{p^{(\mathcal{T})}_{S,R_{s}}}h_{S,R_{s}}z+n_{R_{s}}
\label{recsignaltrans}
\end{align}
and
\begin{align}
r_{R_{s},D}=\sqrt{p^{(\mathcal{T})}_{R_{s},D}}\hat{h}_{R_{s},D}G_{\mathcal{T},s} r_{S,R_{s}}+n_{D}
\label{recsignaldestrans}
\end{align}
where $z$, $h_{S,R_{s}}$, $p^{(\mathcal{T})}_{R_{s},D}$ and $G_{T,s}$ correspond to the source data, instantaneous channel gain from $S$ to $R_{s}$, transmission power from $S$ to $R_{s}$, and fixed gain of the $s$th relay during the transmission phase, respectively. Also, $\hat{h}_{R_{s},D}$ denotes the channel estimate of the selected relay to the destination, based on the instantaneous channel status derived from the previous reporting phase. It is noteworthy that $\hat{h}_{R_{s},D}$ could vary from the actual $h_{R_{s},D}$ due to a possible \emph{outdated} CSI at the destination. This condition is realized when a feedback delay and/or rapidly varying fading channels between the reporting and transmission phases are present. As such, the channel estimate is formed as \cite{ref1}
\begin{align}
\hat{h}_{R_{s},D}\triangleq \rho_{s} h_{R_{s},D}+\left(\sqrt{1-\rho^{2}_{s}}\right)w_{R_{s},D}
\label{channestimate}
\end{align} 
where $w_{R_{s},D}$ is a circularly symmetric complex Gaussian RV with the same variance as $h_{R_{s},D}$, while $\rho_{s}$ denotes the time correlation coefficient between $\hat{h}_{R_{s},D}$ and $h_{R_{s},D}$, defined as \cite[Eq. (4.1-63)]{ref2}
\begin{align}
\rho_{s}\triangleq J_{0}\left(2\pi f^{D}_{R_{s},D}T^{(s)}_{\text{diff}}\right)
\label{rho}
\end{align}
where $f^{D}_{i,j}$ and $T^{(s)}_{\text{diff}}$ are the maximum Doppler frequency on the $i-j$ link and the time difference between the actual instantaneous channel status and its corresponding estimate, respectively. It follows that in the case when the channel instances remain constant between the reporting and transmission phase, then $\rho=1$ and $\hat{h}_{R_{s},D}=h_{R_{s},D}$.

\subsection{Transmission Power of Secondary Nodes}
Although the transmission power of the primary service takes arbitrary values, this condition does not apply for the secondary system. We adopt an average interference constraint for the transmission power of secondary nodes, taking also into consideration the maximum output power, namely, $P_{\text{max}}$. Thereby, for the reporting phase, the following condition should be satisfied
\begin{align}
p^{(\mathcal{R})}_{R_{i},D}=\min\left\{P_{\text{max}},\frac{Q}{\mathbb{E}\left[q_{R_{i}}\right]}\right\},\ \ 1\leq i \leq M
\label{preport}
\end{align}
where $Q$ represents the received power threshold that should not be exceeded at the primary nodes, and 
\begin{align}
q_{R_{i}}\triangleq \max_{j}\left\{\gamma_{i,P_{j}}\right\}^{L}_{j=1}.
\label{q}
\end{align}

In the case when the system enters into the transmission phase, we have from \cite[Eq. (6)]{ref3} that
\begin{align}
p_{S,R_{s}}=\min\left\{P_{\text{max}},\frac{Q}{(1-P_{d})\mathbb{E}\left[q_{S}\right]}\right\}
\label{pstrans}
\end{align}
and
\begin{align}
p^{(\mathcal{T})}_{R_{s},D}=\min\left\{P_{\text{max}},\frac{Q}{(1-P_{d})\mathbb{E}\left[q_{R_{s}}\right]}\right\}
\label{ptrans}
\end{align}
where $q_{S}\triangleq \max_{j}\left\{\gamma_{S,P_{j}}\right\}^{L}_{j=1}$ and $q_{R_{s}}=\max\{\gamma_{R_{s},P_{j}}\}^{L}_{j=1}$.

\section{Performance Analysis}
\label{Section III}
We commence by describing the instantaneous $e2e$ received SNR at the destination in the aforementioned phases of the proposed strategy. Then, the detection probability is derived, followed by some important system performance measures for the secondary system, namely: i) the average harvested power of each relay during the harvesting phase, and ii) the $e2e$ outage probability during the transmission phase.

\subsection{SNR Statistics in the Sensing Phase}
Based on (\ref{recsignalsensing}), the received SNR from primary nodes to $D$ involves the sum of i.n.i.d. exponential RVs (with different link distances) and is obtained by\footnote{Since we model the signals as RVs with known transmission power, energy detector is adopted at the receiver, as being the optimal technique to detect the primary transmission(s) \cite{ref26}.}
\begin{align}
\gamma_{P,D}\triangleq \frac{p_{p}}{N_{0}}\sum^{L}_{j=1}\theta_{j}\gamma_{P_{j},D}.
\label{gpd}
\end{align}
According to the total probability theorem, PDF of $\gamma_{P,D}$ is derived as 
\begin{align}
f_{\gamma_{P,D}}(x)=\sum^{L}_{r=1}\text{Pr}\left[\left(\sum_{j}[\theta_{j}=1]\right)=r\right]f_{\gamma_{P,D}|r}(x),
\label{fgpd11}
\end{align}
where $\text{Pr}[(\sum_{j}[\theta_{j}=1])=r]$ denotes the probability that $r$ primary nodes are active (transmitting), given by
\begin{align*}
\text{Pr}\left[\left(\sum_{j}[\theta_{j}=1]\right)=r\right]\triangleq f_{r}=\binom{L}{r}\text{Pr}[\theta_{j}=1]^{r}(1-\text{Pr}[\theta_{j}=1])^{L-r},
\end{align*}
while $f_{\gamma_{P,D}|r}(x)$ denotes the conditional PDF of $\gamma_{P,D}$ given $r$ primary signal transmissions, which reads as \cite[Eq. (5)]{ref5}
\begin{align}
f_{\gamma_{P,D}|r}(x)=\left(\frac{N_{0}}{p_{p}}\right)\sum^{r}_{k=1}\left(\prod^{k}_{\begin{subarray}{c}j=1\\j\neq k\end{subarray}}\frac{1}{\bar{\gamma}_{P_{k},D}-\bar{\gamma}_{P_{j},D}}\right)\exp\left(-\frac{N_{0}x}{p_{p}\bar{\gamma}_{P_{k},D}}\right).
\label{fgpd1}
\end{align}
Therefore, the unconditional PDF of $\gamma_{P,D}$ is expressed as
\begin{align}
f_{\gamma_{P,D}}(x)=\left(\frac{N_{0}}{p_{p}}\right)\sum^{L}_{r=1}f_{r}\sum^{r}_{k=1}\left(\prod^{k}_{\begin{subarray}{c}j=1\\j\neq k\end{subarray}}\frac{1}{\bar{\gamma}_{P_{k},D}-\bar{\gamma}_{P_{j},D}}\right)\exp\left(-\frac{N_{0}x}{p_{p}\bar{\gamma}_{P_{k},D}}\right).
\label{fgpd}
\end{align}

Moreover, CDF of $\gamma_{P,D}$ is directly obtained as
\begin{align}
F_{\gamma_{P,D}}(x)=1-\sum^{L}_{r=1}f_{r}\sum^{r}_{k=1}\left(\prod^{k}_{\begin{subarray}{c}j=1\\j\neq k\end{subarray}}\frac{1}{\bar{\gamma}_{P_{k},D}-\bar{\gamma}_{P_{j},D}}\right)\bar{\gamma}_{P_{k},D}\exp\left(-\frac{N_{0}x}{p_{p}\bar{\gamma}_{P_{k},D}}\right).
\label{Fgpd}
\end{align}

\subsection{SNR Statistics in the Reporting Phase}
Based on (\ref{recsignalsensing}), the SNR of $P-R_{i}-D$ link (with $1\leq i\leq M$) is given by
\begin{align}
\gamma^{(\mathcal{R})}_{e2e,i}=\frac{\gamma^{(\mathcal{R})}_{1,i}\gamma^{(\mathcal{R})}_{2,i}}{\gamma^{(\mathcal{R})}_{2,i}+\mathcal{U}^{(\mathcal{R})}_{i}}
\label{ge2ereport}
\end{align}
where 
\begin{align}
\gamma^{(\mathcal{R})}_{1,i}\triangleq \frac{p_{p}}{N_{0}}\sum^{L}_{j=1}\theta_{j}\gamma^{(\mathcal{R})}_{P_{j},R_{i}}\text{ and }\gamma^{(\mathcal{R})}_{2,i}\triangleq \frac{p^{(\mathcal{R})}_{R_{i},D}}{N_{0}}\gamma^{(\mathcal{R})}_{R_{i},D}.
\label{g12report}
\end{align}
The parameter $\mathcal{U}^{(\mathcal{R})}_{i}=1/\left(G^{2}_{\mathcal{R},i}N_{0}\right)$ indicates a constant parameter, which is related to the value of fixed gain of the $i$th relay for the reporting phase. Among some popular precoding designs for this parameter, there is quite an efficient one \cite{ref4}, yielding 
\begin{align}
\mathcal{U}^{(\mathcal{R})}_{i}\triangleq \left(\mathbb{E}\left[\frac{1}{\gamma^{(\mathcal{R})}_{1,i}+1}\right]\right)^{-1}.
\label{U}
\end{align}

\begin{lem}
A closed-form expression for the CDF of $\gamma^{(\mathcal{R})}_{e2e,i}$ under i.n.i.d. Rayleigh fading channels reads as
\begin{align}
\nonumber
F_{\gamma^{(\mathcal{R})}_{e2e,i}}(x)&=1-\sum^{L}_{r=1}f_{r}\sum^{r}_{k=1}\frac{2 N_{0}\sqrt{\bar{\gamma}_{P_{k},R_{i}}\mathcal{U}^{(\mathcal{R})}_{i}x}}{\sqrt{p_{p}}(p^{(\mathcal{R})}_{R_{i},D}\bar{\gamma}_{R_{i},D})^{\frac{3}{2}}}\left(\prod^{k}_{\begin{subarray}{c}j=1\\j\neq k\end{subarray}}\frac{1}{\bar{\gamma}_{P_{k},D}-\bar{\gamma}_{P_{j},D}}\right)\\
&\times \exp\left(-\frac{N_{0}x}{p_{p}\bar{\gamma}_{P_{k},R_{i}}}\right) K_{1}\left(2N_{0}\sqrt{\frac{\mathcal{U}^{(\mathcal{R})}_{i}x}{p_{p}p^{(\mathcal{R})}_{R_{i},D}\bar{\gamma}_{P_{k},R_{i}}\bar{\gamma}_{R_{i},D}}}\right)
\label{fg12report}
\end{align}
\end{lem}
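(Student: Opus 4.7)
The plan is to exploit the classical fixed-gain AF structure in (\ref{ge2ereport}), where the end-to-end SNR takes the form $\gamma_1\gamma_2/(\gamma_2+\mathcal{U})$ with $\gamma_1=\gamma^{(\mathcal{R})}_{1,i}$ independent of $\gamma_2=\gamma^{(\mathcal{R})}_{2,i}$. Conditioning on $\gamma_2=y$, the inequality $\gamma^{(\mathcal{R})}_{e2e,i}<x$ is algebraically equivalent to $\gamma_1<x(1+\mathcal{U}^{(\mathcal{R})}_{i}/y)$, which yields
\begin{align*}
F_{\gamma^{(\mathcal{R})}_{e2e,i}}(x)=1-\int_{0}^{\infty}\overline{F}_{\gamma^{(\mathcal{R})}_{1,i}}\!\left(x+\frac{x\mathcal{U}^{(\mathcal{R})}_{i}}{y}\right)f_{\gamma^{(\mathcal{R})}_{2,i}}(y)\,dy,
\end{align*}
where $\overline{F}(\cdot)\triangleq 1-F(\cdot)$.

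Next, I would identify the two ingredients of the integrand in closed form. First, $\gamma^{(\mathcal{R})}_{1,i}$ has exactly the structure of $\gamma_{P,D}$ in (\ref{gpd}), so its CCDF is obtained from (\ref{Fgpd}) simply by replacing every $\bar{\gamma}_{P_k,D}$ by $\bar{\gamma}_{P_k,R_i}$, giving a double sum (over $r$ and $k$) weighted by $f_r$, the partial-fraction product, and an exponential in $x(1+\mathcal{U}^{(\mathcal{R})}_i/y)$. Second, $\gamma^{(\mathcal{R})}_{2,i}=(p^{(\mathcal{R})}_{R_i,D}/N_0)\gamma^{(\mathcal{R})}_{R_i,D}$ is a scaled exponential by (\ref{snrpdf}), whose PDF is a single exponential in $y$ with rate $N_0/(p^{(\mathcal{R})}_{R_i,D}\bar{\gamma}_{R_i,D})$. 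The exponential factor that depends on $x$ alone pulls outside the integral, leaving a kernel of the form $\exp(-a/y-by)$ with
\begin{align*}
a=\frac{N_{0}\,\mathcal{U}^{(\mathcal{R})}_{i}\,x}{p_{p}\bar{\gamma}_{P_{k},R_{i}}},\qquad b=\frac{N_{0}}{p^{(\mathcal{R})}_{R_{i},D}\bar{\gamma}_{R_{i},D}}.
\end{align*}

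The key step is to evaluate this one-dimensional integral in closed form using \cite[Eq.\ (3.471.9)]{tables}, namely $\int_{0}^{\infty}\exp(-a/y-by)\,dy=2\sqrt{a/b}\,K_{1}(2\sqrt{ab})$. Substituting the above $a,b$ produces exactly the modified Bessel function $K_{1}(2N_{0}\sqrt{\mathcal{U}^{(\mathcal{R})}_{i}x/(p_{p}p^{(\mathcal{R})}_{R_{i},D}\bar{\gamma}_{P_{k},R_{i}}\bar{\gamma}_{R_{i},D})})$ that appears in the target expression. Finally, reinserting the double sum over $(r,k)$, the partial-fraction product, the exponential prefactor $\exp(-N_{0}x/(p_{p}\bar{\gamma}_{P_{k},R_{i}}))$, and collecting the algebraic factors arising from $\sqrt{a/b}$ and the rate of $f_{\gamma^{(\mathcal{R})}_{2,i}}$ delivers (\ref{fg12report}).

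I expect the main obstacle to be purely bookkeeping: tracking the factors of $N_0$, $p_p$, $p^{(\mathcal{R})}_{R_i,D}$ and $\bar{\gamma}_{R_i,D}$ through the prefactor $\bar{\gamma}_{P_{k},R_{i}}\cdot b \cdot 2\sqrt{a/b}$ so that the resulting coefficient matches the stated form. There is no real analytical difficulty beyond recognising the Bessel integral; the one subtle point is that the same partial-fraction expansion that produced (\ref{fgpd1}) must be reused with the relay-indexed channel means, and that the independence between $\gamma^{(\mathcal{R})}_{1,i}$ and $\gamma^{(\mathcal{R})}_{2,i}$ (needed to factor the joint distribution under the integral) is implicitly granted by the model, so no further justification is required.
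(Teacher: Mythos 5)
Your proposal is correct and follows essentially the same route as the paper's Appendix~\ref{appa}: the fixed-gain AF conditioning integral (which the paper simply quotes from \cite[Eq.~(15)]{ref6} rather than rederiving), the substitution of the hypoexponential statistics of $\gamma^{(\mathcal{R})}_{1,i}$ obtained from (\ref{Fgpd}) with $D$ replaced by $R_{i}$ together with the exponential PDF of $\gamma^{(\mathcal{R})}_{2,i}$ from (\ref{snrpdf}), and the closed-form evaluation of the resulting $\exp(-a/y-by)$ kernel via \cite[Eq.~(3.471.9)]{tables}, which yields the $K_{1}$ factor. The only differences are cosmetic (you derive the integral by conditioning and work with the complementary CDF, while the paper cites it directly), so the two arguments are the same.
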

with
\begin{align}
\mathcal{U}^{(\mathcal{R})}_{i}=\left(\sum^{L}_{r=1}f_{r}\sum^{r}_{k=1}\left(\prod^{k}_{\begin{subarray}{c}j=1\\j\neq k\end{subarray}}\frac{1}{\bar{\gamma}_{P_{k},D}-\bar{\gamma}_{P_{j},D}}\right)\frac{\Gamma\left(0,\frac{N_{0}}{p_{p}\bar{\gamma}_{P_{k},R_{i}}}\right)}{\exp\left(-\frac{N_{0}}{p_{p}\bar{\gamma}_{P_{k},R_{i}}}\right)}\right)^{-1}
\label{Uclosedform}
\end{align}

\begin{align}
p^{(\mathcal{R})}_{R_{i},D}=\left(\frac{1}{P_{\text{max}}}+\frac{\mathbb{E}[q_{R_{i}}]}{Q}\right)^{-1}
\label{prdreport}
\end{align}
and
\begin{align}
\mathbb{E}[q_{R_{i}}]=\sum^{L}_{l=1}\sum^{L}_{k=0}\underbrace{\sum^{L}_{n_{1}=1}\cdots \sum^{L}_{n_{k}=1}}_{n_{1}\neq \cdots \neq n_{k}\neq k}\frac{(-1)^{k}}{k!\bar{\gamma}_{R_{i},P_{l}}\left(\frac{1}{\bar{\gamma}_{R_{i},P_{l}}}+\sum^{k}_{t=1}\frac{1}{\bar{\gamma}_{R_{i},P_{n_{t}}}}\right)^{2}}.
\label{Eq}
\end{align}

\begin{proof}
The proof is relegated in Appendix \ref{appa}.
\end{proof}

\subsection{Detection Probability}
Capitalizing on the aforementioned statistics, $P_{d}$ can be obtained in a closed-form expression. 
\begin{prop}
Detection probability for a given power threshold $\lambda$, $P_{d}(\lambda)$, is given by
\begin{align}
\nonumber
P_{d}(\lambda)&=1-\left[\prod^{M}_{i=1}F_{\gamma^{(\mathcal{R})}_{e2e,i}}(\lambda)F_{\gamma_{P,D}}(\lambda)\right]^{U}\\
&=1-\Delta(\lambda)^{U},
\label{pdclosed}
\end{align}
where 
\begin{align}
\nonumber
&\Delta(\lambda)=\\
\nonumber
&\prod^{M}_{i=1}\vast(1-\sum^{L}_{r=1}f_{r}\sum^{r}_{k=1}\frac{2 N_{0}\sqrt{\bar{\gamma}_{P_{k},R_{i}}\mathcal{U}^{(\mathcal{R})}_{i}\lambda}}{\sqrt{p_{p}}(p^{(\mathcal{R})}_{R_{i},D}\bar{\gamma}_{R_{i},D})^{\frac{3}{2}}}\left(\prod^{k}_{\begin{subarray}{c}j=1\\j\neq k\end{subarray}}\frac{1}{\bar{\gamma}_{P_{k},D}-\bar{\gamma}_{P_{j},D}}\right)\frac{K_{1}\left(2N_{0}\sqrt{\frac{\mathcal{U}^{(\mathcal{R})}_{i}\lambda}{p_{p}p^{(\mathcal{R})}_{R_{i},D}\bar{\gamma}_{P_{k},R_{i}}\bar{\gamma}_{R_{i},D}}}\right)}{\exp\left(\frac{N_{0}\lambda}{p_{p}\bar{\gamma}_{P_{k},R_{i}}}\right)}\vast)\\
&\times \left(1-\sum^{L}_{r=1}f_{r}\sum^{r}_{k=1}\left(\prod^{k}_{\begin{subarray}{c}j=1\\j\neq k\end{subarray}}\frac{1}{\bar{\gamma}_{P_{k},D}-\bar{\gamma}_{P_{j},D}}\right)\bar{\gamma}_{P_{k},D}\exp\left(-\frac{N_{0}\lambda}{p_{p}\bar{\gamma}_{P_{k},D}}\right)\right)
\label{delta}
\end{align}
and $U=T_{S}\mathcal{W}$ with $\mathcal{W}$ denoting the transmission bandwidth.
\end{prop}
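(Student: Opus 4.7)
The plan is to reduce (\ref{pdclosed}) to a direct application of the complementary-event identity together with the factorization of CDFs for maxima of independent random variables, and then to substitute the CDFs already derived in (\ref{Fgpd}) and (\ref{fg12report}). Specifically, starting from the definition in (\ref{maxsigdet}), I would write
\begin{align*}
P_d(\lambda)=1-\Pr\!\left[\max_{u}\!\left\{\gamma_{P,D},\max_{i}\{\gamma^{(\mathcal{R})}_{e2e,i}\}_{i=1}^{M}\right\}^{U}_{u=1}<\lambda\right],
\end{align*}
and then peel off the outer maximum over the $U$ sensing samples. Invoking the sample-independence assumption stated just after (\ref{maxsigdet}) (which is the standard approximation cited by the paper), the joint event that no sample exceeds $\lambda$ factors into a product of $U$ identical terms, yielding the $U$-th power structure in (\ref{pdclosed}).

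Next I would handle the inner maximum over the $M+1$ spatial paths within a single sample. Because the fading coefficients $g_{P_j,\mathcal{X}}$ and $g_{R_i,D}$ are mutually independent across the primary-to-destination link and each of the $M$ primary-relay-destination links, the CDF of the inner maximum factors as
\begin{align*}
\Pr\!\left[\gamma_{P,D}<\lambda,\,\gamma^{(\mathcal{R})}_{e2e,1}<\lambda,\ldots,\gamma^{(\mathcal{R})}_{e2e,M}<\lambda\right]=F_{\gamma_{P,D}}(\lambda)\prod_{i=1}^{M}F_{\gamma^{(\mathcal{R})}_{e2e,i}}(\lambda),
\end{align*}
which is exactly $\Delta(\lambda)$. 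Combining with the outer factorization over samples gives $P_d(\lambda)=1-\Delta(\lambda)^{U}$.

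All that then remains is bookkeeping: substitute the unconditional CDF of $\gamma_{P,D}$ from (\ref{Fgpd}) for the last bracket in (\ref{delta}), and substitute the CDF $F_{\gamma^{(\mathcal{R})}_{e2e,i}}(\lambda)$ from Lemma~1 (i.e., (\ref{fg12report})) for the $i$-th factor of the product in (\ref{delta}). To complete the link with (\ref{pdclosed}), I would also note that $U=T_S\mathcal{W}$ follows from the Nyquist sampling of a bandwidth-$\mathcal{W}$ signal over the sensing window of duration $T_S$.

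The main conceptual obstacle is not algebraic; it is the justification of the two independence assumptions. Strictly speaking, the $M+1$ paths share the common primary activity indicators $\{\theta_j\}$ and the common data symbols $\{s_j\}$, so the $M+1$ SNRs are not jointly independent in an exact sense; the factorization implicitly treats the unconditional CDFs as if they were. I would flag this by pointing to the already-embedded averaging over $\{\theta_j\}$ inside each CDF (through the $f_r$ weights) and to the sample-independence assumption that the paper explicitly invokes after (\ref{maxsigdet}); beyond that, the result is a direct substitution and no further computation is required.
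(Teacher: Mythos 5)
Your proposal takes essentially the same route as the paper's own proof: you express $P_d(\lambda)$ via the complementary CDF of a maximum of independent terms, factor it over the $U$ samples (using the stated sample-independence assumption and channel coherence within the sensing phase) and over the $M+1$ spatial branches, and then substitute the CDFs in (\ref{Fgpd}) and (\ref{fg12report}) to obtain $1-\Delta(\lambda)^{U}$. Your caveat about the shared $\{\theta_j\}$ across branches is a fair observation about the implicit independence treatment, but the paper relies on the same factorization, so your argument matches its proof.
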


\begin{proof}
Let $Y\triangleq \max\{y_{i}\}^{W}_{i=1}$. Then, the complementary CDF of $Y$, $\bar{F}_{Y}(x)\triangleq \text{Pr}[Y>x]=1-F_{Y}(x)$, becomes $\bar{F}_{Y}(x)=1-\prod^{W}_{i=1}F_{y_{i}}(x)$. Hence, first recall the channel coherence consistency within the duration of sensing time. Then, according to (\ref{maxsigdet}), while using (\ref{Fgpd}) and (\ref{fg12report}), (\ref{pdclosed}) is derived in a closed formulation. 
\end{proof}
At this point it should be mentioned that a similar approach (but with full-blind relays instead) was studied in \cite{ref26}, whereas assuming a single-sampled spectrum sensing, yielding a formula for the detection probability in terms of infinite series representation.

\subsection{Average Harvested Energy}
It is meaningful to investigate the average harvested energy of the $i$th relay ($1\leq i \leq M$), given a fixed harvesting phase duration (c.f., Fig. \ref{fig2}).

\begin{prop}
The average harvested power of the $i$th relay, defined as $\overline{E}_{H,i}$, which is collected upon a detection of the primary system's transmission, is given by
\begin{align}
\overline{E}_{H,i}=P_{d}(\lambda)\tilde{E}_{H,i},
\label{avgEi}
\end{align}
where
\begin{align}
\tilde{E}_{H,i}=\eta p_{p}\sum^{L}_{r=1}f_{r}\sum^{r}_{k=1}\left(\prod^{k}_{\begin{subarray}{c}j=1\\j\neq k\end{subarray}}\frac{1}{\bar{\gamma}_{P_{k},R_{i}}-\bar{\gamma}_{P_{j},R_{i}}}\right)\bar{\gamma}^{2}_{P_{k},R_{i}}.
\label{tildee}
\end{align}
while $P_{d}(\lambda)$ is given by (\ref{pdclosed}).\footnote{It is noteworthy that the corresponding average harvested power of the secondary source is directly obtained by substituting subscript $R_{i}$ with $S$ into (\ref{avgEi}), denoting the corresponding node.}  
\end{prop}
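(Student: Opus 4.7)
The plan is a direct expectation computation of (\ref{energyharv}). First, I would pull the deterministic factors $\eta$, $p_p$, and $P_d(\lambda)$ outside the expectation, so that the task reduces to evaluating $\mathbb{E}\!\left[\sum_{j=1}^L \theta_j\,\bar\gamma_{P_j,R_i}\,|g_{P_j,R_i}|^2\right]$. I would then apply the law of total expectation by conditioning on the number $r$ of currently active primaries, precisely as was done in (\ref{fgpd11}) for $f_{\gamma_{P,D}}(x)$. This produces an outer sum over $r$ weighted by the binomial PMF $f_r$ defined immediately after (\ref{fgpd11}), which is exactly the $\sum_{r=1}^L f_r$ appearing in (\ref{tildee}).

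For the inner conditional expectation, I would observe that, under i.n.i.d.\ Rayleigh fading, the nonzero summands $\bar\gamma_{P_j,R_i}|g_{P_j,R_i}|^2$ are independent exponential random variables with distinct means $\bar\gamma_{P_j,R_i}$, so their sum follows a hypoexponential distribution whose PDF admits the same partial-fraction representation already exploited in (\ref{fgpd1})---only with destination indices $D$ replaced by relay indices $R_i$ and with the $N_0/p_p$ SNR rescaling stripped off (since here we deal with channel gains directly rather than SNRs). The conditional mean then reduces term-by-term to elementary integrals of the type $\int_0^\infty w\,e^{-w/\bar\gamma_{P_k,R_i}}\,dw = \bar\gamma_{P_k,R_i}^{\,2}$, which is precisely the source of the squared channel gains in the numerator of (\ref{tildee}).

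Assembling the outer $\sum_r f_r$, the partial-fraction coefficients inherited from the hypoexponential PDF, and the $\bar\gamma_{P_k,R_i}^{\,2}$ contributed by the inner integral, together with the pre-factored $\eta p_p$ and $P_d(\lambda)$, immediately yields (\ref{avgEi}). I do not foresee any substantial obstacle: no new probabilistic machinery is required and the heaviest step is an elementary Gamma integral. The only point requiring a little care is the bookkeeping needed to recognise that the random sum inside (\ref{energyharv}) inherits verbatim the hypoexponential decomposition used in the sensing-phase PDF derivation, so that the whole result drops out by re-using (\ref{fgpd1}) with the obvious change of labels.
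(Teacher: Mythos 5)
Your proposal is correct and follows essentially the same route as the paper: both rely on the hypoexponential (partial-fraction) form of the sum of i.n.i.d.\ exponentials inherited from the sensing-phase PDF in (\ref{fgpd})--(\ref{fgpd1}) with $D$ relabelled as $R_i$, and reduce the mean to the elementary integral $\int_0^\infty x e^{-x/\bar{\gamma}_{P_k,R_i}}\,dx=\bar{\gamma}_{P_k,R_i}^{2}$. The only cosmetic difference is that you pull out $\eta p_p P_d(\lambda)$ and condition on the number of active primaries before integrating, whereas the paper writes the unconditional PDF of the scaled variable $E_{H,i}$ and integrates it directly -- the same computation either way.
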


\begin{proof}
It holds that $\overline{E}_{H,i}\triangleq \mathbb{E}[E_{H,i}]=\int^{\infty}_{0}x f_{E_{H,i}}(x)dx$, while using (\ref{energyharv}) and (\ref{fgpd}), we have that
\begin{align*}
f_{E_{H,i}}(x)=\sum^{L}_{r=1}f_{r}\sum^{r}_{k=1}\left(\prod^{k}_{\begin{subarray}{c}j=1\\j\neq k\end{subarray}}\frac{1}{\left(\bar{\gamma}_{P_{k},R_{i}}-\bar{\gamma}_{P_{j},R_{i}}\right)}\right)\frac{\exp\left(-\frac{x}{\eta P_{d}p_{p}\bar{\gamma}_{P_{k},R_{i}}}\right)}{\eta P_{d}(\lambda)p_{p}}.
\end{align*} 
Then, after some simple manipulations, (\ref{avgEi}) is obtained.
\end{proof}

\subsection{Outage Probability in the Transmission Phase}
Following similar lines of reasoning as in the reporting phase, while based on (\ref{recsignaltrans}) and (\ref{recsignaldestrans}), the corresponding $e2e$ SNR of the $S-R_{s}-D$ link is given by
\begin{align}
\gamma^{(\mathcal{T})}_{e2e,s}=\frac{\gamma^{(\mathcal{T})}_{1,s}\gamma^{(\mathcal{T})}_{2,s}}{\gamma^{(\mathcal{T})}_{2,s}+\mathcal{U}^{(\mathcal{T})}_{s}}
\label{ge2etrans}
\end{align}
where 
\begin{align}
\gamma^{(\mathcal{T})}_{1,s}\triangleq \frac{p_{S,R_{s}}}{N_{0}}\left|h_{S,R_{s}}\right|^{2}\text{ and }\gamma^{(\mathcal{T})}_{2,s}\triangleq \frac{p^{(\mathcal{T})}_{R_{s},D}}{N_{0}}\left|\hat{h}_{R_{s},D}\right|^{2}.
\label{g12trans}
\end{align}
Also, $\mathcal{U}^{(\mathcal{T})}_{s}$ is explicitly defined in (\ref{U}), by substituting the superscript $(\cdot)^{\mathcal{R}}$ with $(\cdot)^{\mathcal{T}}$, denoting the transmission phase this time. 

In what follows, it is important to investigate the scenario of asymmetric channels for the secondary nodes (i.n.i.d. statistics), in a sense that relays usually keep unequal distances between the two secondary end nodes (i.e., $S$ and $D$) in practical dual-hop network setups. However, the scenario of symmetric channels (i.i.d. statistics), i.e., when the distances between the source-to-relay and relay-to-destination links are equal for each relay, is also considered, which can serve as a performance benchmark.

\subsubsection{Outage Probability}
Outage probability, $P_{\text{out}}$, is defined as the probability that the SNR of the $e2e$ $S-R_{s}-D$ link falls below a certain threshold value, $\gamma_{\text{th}}$, such that
\begin{align}
P_{\text{out}}(\gamma_{\text{th}})=\text{Pr}\left[\gamma^{(\mathcal{T})}_{e2e,s}\leq \gamma_{\text{th}}\right].
\label{poutdef}
\end{align}

\begin{lem}
CDF of the $e2e$ SNR for the $S-R_{s}-D$ link over Rayleigh fading channels is expressed as 
\begin{align}
F_{\gamma^{(\mathcal{T})}_{e2e,s}}(x)=1-2\:\Xi_{M}\exp\left(-\frac{N_{0}x}{p_{S,R_{s}}\bar{\gamma}_{S,R_{s}}}\right)\sqrt{\frac{N_{0}\mathcal{U}^{(\mathcal{T})}_{s}x}{Z_{M}p_{S,R_{s}}\bar{\gamma}_{S,R_{s}}}}K_{1}\left(2\sqrt{\frac{N_{0}\mathcal{U}^{(\mathcal{T})}_{s}Z_{M}x}{p_{S,R_{s}}\bar{\gamma}_{S,R_{s}}}}\right)
\label{cdftrans}
\end{align}
where
\begin{align*}
\Xi_{M}\triangleq \frac{\Psi_{M}}{(1-\rho^{2}_{l})p^{(\mathcal{T})}_{R_{l},D}\bar{\gamma}_{R_{l},D}\left(\Phi_{M}+\frac{\rho^{2}_{l}}{(1-\rho^{2}_{l})p^{(\mathcal{T})}_{R_{l},D}\bar{\gamma}_{R_{l},D}}\right)}
\end{align*}

\begin{align*}
Z_{M}\triangleq \frac{1}{(1-\rho^{2}_{l})p^{(\mathcal{T})}_{R_{l},D}\bar{\gamma}_{R_{l},D}}-\frac{\rho^{2}_{l}}{(1-\rho^{2}_{l})^{2}(p^{(\mathcal{T})}_{R_{l},D}\bar{\gamma}_{R_{l},D})^{2}\left(\Phi_{M}+\frac{\rho^{2}_{l}}{(1-\rho^{2}_{l})p^{(\mathcal{T})}_{R_{l},D}\bar{\gamma}_{R_{l},D}}\right)}
\end{align*}
with
\begin{align*}
&\Psi_{M}\triangleq \left\{
\begin{array}{c l}     
    \displaystyle \sum^{M}_{l=1}\sum^{M}_{k=0}\underbrace{\sum^{M}_{n_{1}=1}\cdots \sum^{M}_{n_{k}=1}}_{n_{1}\neq \cdots \neq n_{k}\neq k}\frac{(-1)^{k}}{k!p^{(\mathcal{T})}_{R_{l},D}\bar{\gamma}_{R_{l},D}}, & \text{for i.n.i.d.}\\
    & \\
    \displaystyle \sum^{M-1}_{l=0}\frac{\binom{M-1}{l}(-1)^{l}M}{p^{(\mathcal{T})}_{R,D}\bar{\gamma}_{R,D}}, & \text{for i.i.d.}
\end{array}\right.
\end{align*}

\begin{align*}
&\Phi_{M}\triangleq \left\{
\begin{array}{c l}     
    \displaystyle \frac{1}{p^{(\mathcal{T})}_{R_{l},D}\bar{\gamma}_{R_{l},D}}+\sum^{k}_{t=1}\frac{1}{p^{(\mathcal{T})}_{R_{n_{t}},D}\bar{\gamma}_{R_{n_{t}},D}}, & \text{for i.n.i.d.}\\
    & \\
    \displaystyle \frac{(l+1)}{p_{R,D}\bar{\gamma}^{(\mathcal{T})}_{R,D}}, & \text{for i.i.d.}
\end{array}\right.
\end{align*}

\begin{align}
p_{S,R_{s}}=\left(\frac{1}{P_{\text{max}}}+\frac{(1-P_{d})\mathbb{E}[q_{S}]}{Q}\right)^{-1}
\label{psrtrans}
\end{align}

\begin{align}
p^{(\mathcal{T})}_{R_{l},D}=\left(\frac{1}{P_{\text{max}}}+\frac{(1-P_{d})\mathbb{E}[q_{R_{l}}]}{Q}\right)^{-1}
\label{prdtrans}
\end{align}
and
\begin{align}
\mathcal{U}^{(\mathcal{T})}_{i}=\left(\left(\frac{N_{0}}{p_{S,R_{s}}\bar{\gamma}_{S,R_{s}}}\right)\frac{\Gamma\left(0,\frac{N_{0}}{p_{S,R_{s}}\bar{\gamma}_{S,R_{s}}}\right)}{\exp\left(-\frac{N_{0}}{p_{S,R_{s}}\bar{\gamma}_{S,R_{s}}}\right)}\right)^{-1}.
\label{utranss}
\end{align}
\end{lem}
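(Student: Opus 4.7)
The plan is to reduce the end-to-end CDF to a one-dimensional integral against the second-hop density and then derive that density under the combined effects of best-relay selection (performed on the reporting-phase channels) and outdated CSI as in~(\ref{channestimate}). First, I would condition on $\gamma^{(\mathcal{T})}_{2,s}=y$ and exploit that $\gamma^{(\mathcal{T})}_{1,s}$ is a scaled exponential, since $|h_{S,R_{s}}|^{2}$ is not affected by relay selection. Writing $\{\gamma^{(\mathcal{T})}_{e2e,s}\le x\}=\{\gamma^{(\mathcal{T})}_{1,s}\le x(1+\mathcal{U}^{(\mathcal{T})}_{s}/y)\}$ yields
\begin{align*}
F_{\gamma^{(\mathcal{T})}_{e2e,s}}(x)=1-e^{-\frac{N_{0}x}{p_{S,R_{s}}\bar{\gamma}_{S,R_{s}}}}\int_{0}^{\infty}e^{-\frac{N_{0}\mathcal{U}^{(\mathcal{T})}_{s}x}{y\,p_{S,R_{s}}\bar{\gamma}_{S,R_{s}}}}f_{\gamma^{(\mathcal{T})}_{2,s}}(y)\,dy,
\end{align*}
which cleanly separates the first-hop contribution from everything depending on selection and outdated CSI.

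Next, to obtain $f_{\gamma^{(\mathcal{T})}_{2,s}}(y)$, I would use the standard bivariate exponential joint PDF of $(|h_{R_{l},D}|^{2},|\hat{h}_{R_{l},D}|^{2})$ with correlation parameter $\rho_{l}^{2}$ (correlated-Rayleigh model with an $I_{0}$ kernel) together with the selection rule $s=\arg\max_{l}|h_{R_{l},D}|^{2}$ applied to the reporting-phase gains. This gives $f_{|\hat{h}_{R_{s},D}|^{2}}(y)=\sum_{l=1}^{M}\int_{0}^{\infty}f_{|h_{R_{l},D}|^{2},|\hat{h}_{R_{l},D}|^{2}}(v,y)\prod_{m\ne l}F_{|h_{R_{m},D}|^{2}}(v)\,dv$. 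The product of CDFs is expanded binomially in the i.i.d.\ case and via $\prod_{m\ne l}(1-e^{-v/\bar{\gamma}_{R_{m},D}})=\sum_{k=0}^{M-1}(-1)^{k}\sum_{\{n_{t}\}}e^{-v\sum_{t=1}^{k}\bar{\gamma}^{-1}_{R_{n_{t}},D}}$ (subsets of size $k$ drawn from $\{1,\ldots,M\}\setminus\{l\}$) in the i.n.i.d.\ case. The inner integral in $v$ is then evaluated by $\int_{0}^{\infty}e^{-\alpha v}I_{0}(2\beta\sqrt{v})\,dv=\alpha^{-1}e^{\beta^{2}/\alpha}$ (Gradshteyn--Ryzhik \cite{tables}), after which the density collapses to a finite mixture of exponentials whose rates and amplitudes reproduce exactly the $Z_{M}$ and $\Xi_{M}$ of the statement: the quantity $\Phi_{M}+\rho_{l}^{2}/[(1-\rho_{l}^{2})p^{(\mathcal{T})}_{R_{l},D}\bar{\gamma}_{R_{l},D}]$ is the $\alpha$ obtained by combining the outdated-CSI exponential with the $k$-subset exponentials, and the subtraction defining $Z_{M}$ is the residue of the $\beta^{2}/\alpha$ term.

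Plugging this mixture back into the outer integral and applying $\int_{0}^{\infty}\exp(-a/y-by)\,dy=2\sqrt{a/b}\,K_{1}(2\sqrt{ab})$~\cite{tables} with $a=N_{0}\mathcal{U}^{(\mathcal{T})}_{s}x/(p_{S,R_{s}}\bar{\gamma}_{S,R_{s}})$ and $b=Z_{M}$ produces (\ref{cdftrans}). The remaining auxiliaries are routine: $p_{S,R_{s}}$ and $p^{(\mathcal{T})}_{R_{l},D}$ follow from (\ref{pstrans})--(\ref{ptrans}) once $\mathbb{E}[q_{S}]$ and $\mathbb{E}[q_{R_{l}}]$ are expanded via the same max-of-exponentials manipulation behind (\ref{Eq}) in Appendix~\ref{appa}, while $\mathcal{U}^{(\mathcal{T})}_{s}$ drops out of (\ref{U}) by evaluating $\mathbb{E}[1/(\gamma^{(\mathcal{T})}_{1,s}+1)]$ against the exponential density of $\gamma^{(\mathcal{T})}_{1,s}$, yielding the $\Gamma(0,\cdot)$ form in (\ref{utranss}). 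The hard part is the i.n.i.d.\ bookkeeping: carrying the triple sum over $l$, $k$, and the $k$-subsets coherently through the $I_{0}$-integral and checking that the algebraic combination $1/[(1-\rho_{l}^{2})p^{(\mathcal{T})}_{R_{l},D}\bar{\gamma}_{R_{l},D}]-\beta^{2}/\alpha^{2}$ collapses exactly to the stated $Z_{M}$ -- a place where a single mis-indexed subscript would silently break the match with (\ref{cdftrans}), so the i.i.d.\ specialization (where the triple sum reduces to the binomial identity) is the natural sanity check.
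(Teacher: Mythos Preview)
Your proposal is correct and follows essentially the same route as the paper: the paper likewise writes the $e2e$ CDF as an integral against the second-hop density, obtains that density as $\Xi_{M}\exp(-Z_{M}y)$ via the concomitant-of-order-statistics construction (the conditional bivariate exponential with $I_{0}$ kernel integrated against the max-of-exponentials PDF $\Psi_{M}e^{-\Phi_{M}v}$, invoking \cite[Eq.~(2.15.5.4)]{ref10} for the $I_{0}$ integral), and then closes with \cite[Eq.~(3.471.9)]{tables} to produce the $K_{1}$ factor. The auxiliaries $p_{S,R_{s}}$, $p^{(\mathcal{T})}_{R_{l},D}$, and $\mathcal{U}^{(\mathcal{T})}_{s}$ are handled exactly as you sketch, by mimicking Appendix~\ref{appa}.
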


\begin{proof}
The proof is provided in Appendix \ref{appb}.
\end{proof}

\begin{prop}
Outage probability of the $e2e$ SNR for the secondary system, during its transmission phase, is presented for i.n.i.d. and i.i.d. channels, respectively, as
\begin{align}
P^{(\text{i.n.i.d.})}_{\text{out}}(\gamma_{\text{th}})=\sum^{M}_{i=1}\text{Pr}[R_{i}]F_{\gamma^{(\mathcal{T})}_{e2e,i}}(\gamma_{\text{th}})
\label{pouttrans}
\end{align}
with
\begin{align}
\text{Pr}[R_{i}]\triangleq \left(1-\frac{\Xi_{M-1}}{\left(N_{0}+p_{R_{i},D}\bar{\gamma}_{R_{i},D}Z_{M-1}\right)Z_{M-1}}\right)
\label{prinid}
\end{align}
and
\begin{align}
P^{(\text{i.i.d.})}_{\text{out}}(\gamma_{\text{th}})=F_{\gamma^{(\mathcal{T})}_{e2e}}(\gamma_{\text{th}})
\label{pouttransiid}
\end{align}
where $F_{\gamma^{(\mathcal{T})}_{e2e,i}}(\cdot)$ is given by (\ref{cdftrans}), while $F_{\gamma^{(\mathcal{T})}_{e2e}}(\cdot)$ is obtained by dropping index-$i$ from the latter expression, reflecting to identical statistics.
\end{prop}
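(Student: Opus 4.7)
My plan is to decompose the outage event according to which relay is selected, which is the natural bookkeeping device when the relays have non-identical link statistics. Starting from (\ref{poutdef}), I would apply the law of total probability partitioning on the selected-relay index $s\in\{1,\dots,M\}$:
\begin{align*}
P^{(\text{i.n.i.d.})}_{\text{out}}(\gamma_{\text{th}})=\sum^{M}_{i=1}\text{Pr}[s=i]\:\text{Pr}\!\left[\left.\gamma^{(\mathcal{T})}_{e2e,s}\leq\gamma_{\text{th}}\:\right|\: s=i\right].
\end{align*}
The conditional CDF on the right-hand side is precisely $F_{\gamma^{(\mathcal{T})}_{e2e,i}}(\gamma_{\text{th}})$ from Lemma~2, specialised to the $i$th relay (i.e., keeping the index $l=i$ rather than summing over all relays); the factor $\text{Pr}[s=i]$, which I label $\text{Pr}[R_{i}]$, is the probability that $R_{i}$ attains the maximum reporting-phase SNR to $D$.

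The second step is to evaluate $\text{Pr}[R_{i}]$ in closed form. Since $s=\arg\max_{l}\gamma_{R_{l},D}$, standard order-statistics reasoning gives
\begin{align*}
\text{Pr}[R_{i}]=\int^{\infty}_{0}f_{\gamma_{R_{i},D}}(x)\prod^{M}_{\begin{subarray}{c}l=1\\l\neq i\end{subarray}}F_{\gamma_{R_{l},D}}(x)\:dx.
\end{align*}
Expanding each exponential CDF via $F(x)=1-\exp(-\cdot)$ and re-indexing the resulting multinomial-type sum over $M-1$ auxiliary indices produces the same bookkeeping used to define $\Psi_{M}$ and $\Phi_{M}$ in Lemma~2, but now with cardinality $M-1$ instead of $M$. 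Carrying out the remaining one-dimensional exponential integral against $f_{\gamma_{R_{i},D}}$ produces a single-fraction form which, after algebraic simplification, I expect to yield exactly (\ref{prinid}) with $\Xi_{M-1}$ and $Z_{M-1}$ — the ``$1-\cdot$'' shape arises naturally from writing the integral as $1-\text{Pr}[\max_{l\neq i}\gamma_{R_{l},D}>\gamma_{R_{i},D}]$ and integrating by parts (equivalently, swapping the order of integration on the complementary event).

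For the i.i.d.\ case, the chain collapses by symmetry: $\text{Pr}[R_{i}]=1/M$ for every $i$, and $F_{\gamma^{(\mathcal{T})}_{e2e,i}}(\cdot)$ is identical across $i$ because all relays share the same marginal statistics and correlation coefficient $\rho$. Hence the weighted sum telescopes to a single copy of the common CDF, reproducing (\ref{pouttransiid}). In Lemma~2 the i.i.d.\ branch of $\Psi_{M}$ already carries the factor $M$, which accounts for the sum over the selected-relay identity and is consistent with the disappearance of the explicit $\text{Pr}[R_{i}]$ weighting.

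The main obstacle I anticipate is not the total-probability step but the bookkeeping inside the integral for $\text{Pr}[R_{i}]$: matching the resulting nested partial-fraction expansion with the precise definitions of $\Xi_{M-1}$, $Z_{M-1}$, $\Psi_{M-1}$, $\Phi_{M-1}$ requires carefully excluding index $i$ from every inner summation and verifying that the $\rho$-free reporting-phase factors combine into the same algebraic template used for the $\rho$-dependent transmission-phase CDF. A second subtlety is the independence assumption implicit in treating the conditional CDF $\text{Pr}[\gamma^{(\mathcal{T})}_{e2e,i}\leq\gamma_{\text{th}}\mid s=i]$ as equal to the unconditional $F_{\gamma^{(\mathcal{T})}_{e2e,i}}$ of Lemma~2; this is justified because $h_{S,R_{i}}$ and the innovation $w_{R_{i},D}$ in (\ref{channestimate}) are independent of the reporting-phase selection variables, and the relevant correlation between $\hat{h}_{R_{i},D}$ and $h_{R_{i},D}$ is already absorbed through $\rho_{i}$. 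Once these two points are secured, the closed-form expressions (\ref{pouttrans})--(\ref{pouttransiid}) follow directly.
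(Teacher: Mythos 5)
Your proposal is correct and follows essentially the same route as the paper's Appendix C: a total-probability decomposition over the selected relay, the order-statistics integral $\text{Pr}[R_i]=\int_0^\infty f_{\gamma_{R_i,D}}(x)\,F_{\gamma_{R_v,D}}(x)\,dx$ with $F_{\gamma_{R_v,D}}$ the CDF of the maximum of the remaining $M-1$ relay SNRs, and the symmetry argument $\text{Pr}[R_i]=1/M$ in the i.i.d.\ case. The only cosmetic difference is that the paper obtains the max-CDF compactly by reusing (\ref{uncond1}) with $M$ replaced by $M-1$ (giving the $\Xi_{M-1}/Z_{M-1}$ form directly), whereas you plan to expand the product of exponential CDFs and re-derive that same expression.
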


\begin{proof}
The proof is given in Appendix \ref{appc}.
\end{proof}

At this point, it should be stated that the previously derived expressions regarding the detection and outage probabilities along with the average harvested power are in a closed form including finite sum series, whereas they are exact. Hence, they are much more computationally efficient than existing methods so far (e.g., the rather demanding Monte Carlo (MC) simulations and/or manifold numerical integrations).

\subsection{Avoiding Saturation of Fixed Gain Relaying}
When the received signal at the first hop is slightly faded or the link-distance between the involved nodes is quite small, the aforementioned fixed gain may go into saturation \cite[\S V]{ref40}. To avoid such a scenario, we introduce the modified fixed gain, suitable for practical applications, used for both the reporting and transmitting phases correspondingly. Starting from the former phase and recalling (\ref{recsignaldestsensing}), it should hold for $G_{\mathcal{R},i}$ that
\begin{align}
G^{2}_{\mathcal{R},i}\left(p_{p}\sum^{L}_{j=1}\theta_{j}\left|g_{P_{j},R_{i}}\right|^{2}+N_{0}\right)\leq \mathcal{K}_{\mathcal{R}}
\label{modgainreq}
\end{align}
where $\mathcal{K}_{\mathcal{R}}$ represents a certain parameter that should not be exceeded during the AF process at the $i$th relay. A numerical solution of $\mathcal{K}_{\mathcal{R}}$ is provided in Appendix \ref{appd}. Thereby, substituting $G_{\mathcal{R},i}$ with $G_{\mathcal{R},i}\sqrt{\mathcal{K}_{\mathcal{R}}}$, the modified fixed gain is obtained, which results in clipping at the amplifier whenever (\ref{modgainreq}) is violated. Furthermore, it is straightforward to show that the modified gain for the transmission phase becomes $G_{\mathcal{T},i}\sqrt{\mathcal{K}_{\mathcal{T}}}$ with $\mathcal{K}_{\mathcal{T}}$ denoting the corresponding constant value used for this phase. The solution of $\mathcal{K}_{\mathcal{T}}$ can be obtained by following similar lines of reasoning as for the derivation of the aforementioned $\mathcal{K}_{\mathcal{R}}$.


\section{Energy Consumption}
\label{Section IV}
Motivated by the general interests towards green communications in emerging and future wireless systems, we deploy the results of the previous sections to analyze the average energy required per node in the proposed CR cooperative system. Subsequently, a necessary condition for minimizing this quantity is presented.

\subsection{Average Energy Consumption}
The average energy consumed at the $i$th relay in each frame (c.f., Fig. \ref{fig2}) is given by
\begin{align}
\overline{E}^{(i)}_{\text{total}}\triangleq \overline{E}_{S,i}T_{S}U+\overline{E}_{R,i}T_{R}U+(1-P_{d}(\lambda))\text{Pr}\left[R_{i}\right]\overline{E}_{T,i}T_{D}-\overline{E}_{H,i}T_{D}
\label{totale}
\end{align}
where $\overline{E}_{S,i}$, $\overline{E}_{R,i}$ and $\overline{E}_{T,i}$ are the average energy consumed at the sensing, reporting and transmission phase, respectively. In addition, $T_{S}$, $T_{R}$ and $T_{D}$ denote the duration of the sensing, reporting and transmission phase, correspondingly. Recall that $T_{S}\mathcal{W}=U$ stands for the time-bandwidth product, i.e., the number of samples required at the sensing phase. Notice from (\ref{totale}) that a transmission event occurs with a probability $1-P_{d}(\lambda)$, whereas $\text{Pr}\left[R_{i}\right]$ denotes the probability that the $i$th relay is selected for transmission, as provided in (\ref{prinid}) for i.n.i.d. and (\ref{priid}) for i.i.d. setups, respectively. On the other hand, as previously stated, all the relays enter into the harvesting phase with a probability $P_{d}(\lambda)$ during $T_{D}$, when a primary transmission is sensed. Notice that $P_{d}(\lambda)$ is already included within $\overline{E}_{H,i}$, by referring back to (\ref{avgEi}). 

The sensing energy can be considered identical for all the secondary nodes and, thus, (\ref{totale}) is simplified to
\begin{align}
\overline{E}^{(i)}_{\text{total}}\triangleq \overline{E}_{S}T_{S}U+\overline{E}_{R,i}T_{R}U+\left((1-P_{d}(\lambda))\text{Pr}\left[R_{i}\right]\overline{E}_{T,i}-P_{d}(\lambda)\tilde{E}_{H,i}\right)T_{D}.
\label{totale1}
\end{align}
Regarding $\overline{E}_{S}$, it holds that \cite{ref18} $\overline{E}_{S}=P_{R_{x}}$, where $P_{R_{x}}$ is the circuit power used to capture the received signal(s) power. Moreover, for $\overline{E}_{R,i}$, we have that \cite{ref18}-\cite{ref13} $\overline{E}_{R,i}=p^{(\mathcal{R})}_{R_{i},D}+P_{T_{x}}$, where $P_{T_{x}}$ is the circuit power used for signal transmission and $p^{(\mathcal{R})}_{R_{i},D}$ is given by (\ref{preport}). In general, both $P_{T_{x}}$ and $P_{R_{x}}$ are quite small, since each AF relay does not perform decoding, which is usually a more power-consuming operation \cite{ref18}. Similarly, $\overline{E}_{T,i}$ is obtained as \cite{ref12,ref13} $\overline{E}_{T,i}=p^{(\mathcal{T})}_{R_{i},D}+P_{T_{x}}$, where $p^{(\mathcal{T})}_{R_{i},D}$ is provided in (\ref{ptrans}). Finally, $\overline{E}_{H,i}$ is presented in (\ref{avgEi}) and, hence, the average energy consumption is obtained in a closed-form.

\subsection{Energy Consumption Minimization}
A fundamental requirement for the energy minimization problem is the provision of an appropriate time duration for the transmission of a certain amount of data bits. On the other hand, this duration should also satisfy an appropriate energy harvesting level (when primary transmission is detected), which minimizes the energy consumption. Based on the proposed mode of (hybrid) operation, while closely observing (\ref{totale1}), the tradeoff between sensing time and transmission/harvesting time duration is not trivial. 

First, we assume that $T_{R}$ is fixed, since only the transmission of a probe message is involved into the reporting phase. Then, a maximization of $T_{S}$ (i.e., a higher $U$) would reflect to a higher detection probability by capturing the transmission of primary nodes, at the cost of minimizing $T_{D}$. In turn, a shorter $T_{D}$ would reduce the amount of transmitted data (in the case of transmission phase), whereas it would also reduce the amount of harvested energy (in the case of harvesting phase). Thereby, maximizing $T_{S}$ results to a higher energy consumption. On the other hand, minimizing $T_{S}$ is also not a fruitful option, since the detection probability is reduced and, hence, a potential primary transmission may not be captured accurately. In such a case, a packet collision could occur during data transmission, while causing unexpected co-channel interference to the primary service.

Based on (\ref{delta}) and (\ref{tildee}), (\ref{totale1}) becomes
\begin{align}
\nonumber
\overline{E}^{(i)}_{\text{total}}&\triangleq \overline{E}_{S}T_{S}U+\overline{E}_{R,i}T_{R}U+\left(\Delta(\lambda)^{T_{S}\mathcal{W}}\text{Pr}\left[R_{i}\right]\overline{E}_{T,i}-\tilde{E}_{H,i}\left(1-\Delta(\lambda)^{T_{S}\mathcal{W}}\right)\right)T_{D}\\
&=\overline{E}_{S}T^{2}_{S}\mathcal{W}+\overline{E}_{R,i}T_{R}T_{S}\mathcal{W}+\left(\Delta(\lambda)^{T_{S}\mathcal{W}}\text{Pr}\left[R_{i}\right]\overline{E}_{T,i}-\tilde{E}_{H,i}\left(1-\Delta(\lambda)^{T_{S}\mathcal{W}}\right)\right)(T-T_{S})
\label{totale111}
\end{align}
where $T\triangleq T_{\text{total}}-T_{R}$ is the remaining duration of each frame, which is assumed as a fixed constant, whereas $T_{\text{total}}$ is the total frame duration.
 
Then, the considered energy consumption minimization problem is formulated as
\begin{align}
\nonumber
&\min_{T_{S}\geq 0} \overline{E}^{(i)}_{\text{total}}\\
&\text{s.t. }\mathcal{D}^{(i)}\geq \mathcal{D}^{\star}
\label{optprooo}
\end{align}
where $\mathcal{D}^{\star}$ denotes a \emph{target} amount of data bits that should be transmitted during the transmission phase and $\mathcal{D}^{(i)}$ stands for the amount of transmitted data of the $i$th relay within a given transmission duration, which is defined as
\begin{align}
\nonumber
\mathcal{D}^{(i)}&\triangleq (1-P_{d}(\lambda))\text{Pr}\left[R_{i}\right]R\:T_{D}=\Delta(\lambda)^{T_{S}\mathcal{W}}R (T-T_{S})\\
&=\Delta(\lambda)^{T_{S}\mathcal{W}}\text{Pr}\left[R_{i}\right]\mathcal{W}\log_{2}\left(1+\gamma^{(\mathcal{T})}_{e2e,i}\right)(T-T_{S})
\label{data}
\end{align}
with $R\triangleq \mathcal{W}\log_{2}(1+\gamma^{(\mathcal{T})}_{e2e,i})$ denoting the transmission rate (in bps), while $\gamma^{(\mathcal{T})}_{e2e,i}$ is explicitly defined back in (\ref{ge2etrans}).

Keeping in mind that $\text{ln}(\Delta(\lambda))\leq 0$ (since $\Delta(\lambda)\leq 1$), it readily follows from (\ref{totale111}) that
\begin{align}
\frac{\partial^{2}\overline{E}^{(i)}_{\text{total}}}{\partial^{2}T_{S}}=2E_{S}\mathcal{W}-\Delta(\lambda)^{T_{S}\mathcal{W}}(\tilde{E}_{H,i}+\overline{E}_{T,i}\text{Pr}\left[R_{i}\right])\mathcal{W}\text{ln}(\Delta(\lambda))(2-(T-T_{S})\mathcal{W}\text{ln}(\Delta(\lambda)))\geq 0
\end{align}
revealing the convexity of the objective function. In addition, to extract the hidden convexity of the included constraint of (\ref{optprooo}), the following transformation is applied
\begin{align}
\nonumber
&\Delta(\lambda)^{T_{S}\mathcal{W}}\text{Pr}\left[R_{i}\right]\mathcal{W}\log_{2}\left(1+\gamma^{(\mathcal{T})}_{e2e,i}\right)(T-T_{S})\geq \mathcal{D}^{\star}\\
&\Leftrightarrow \underbrace{2^{\frac{\Delta(\lambda)^{-T_{S}\mathcal{W}}\mathcal{D}^{\star}}{(T-T_{S})\mathcal{W}\text{Pr}\left[R_{i}\right]}}-1-\gamma^{(\mathcal{T})}_{e2e,i}}_{\triangleq \mathcal{D}'^{(i)}}\leq 0.
\end{align}
Hence, the original optimization problem in (\ref{optprooo}) can be reformulated in the standard form as
\begin{align}
\nonumber
&\min_{T_{S}\geq 0} \overline{E}^{(i)}_{\text{total}}\\
&\text{s.t. }\mathcal{D}'^{(i)}\leq 0.
\label{optpro}
\end{align}
Also, it holds that
\begin{align}
\nonumber
&\frac{\partial^{2}\mathcal{D}'^{(i)}}{\partial^{2}T_{S}}=2^{\frac{\Delta(\lambda)^{-T_{S}\mathcal{W}}\mathcal{D}^{\star}}{(T-T_{S})\mathcal{W}\text{Pr}\left[R_{i}\right]}}\vast\{\text{ln}^{2}(2)\left(\frac{\Delta(\lambda)^{-T_{S}\mathcal{W}}\mathcal{D}^{\star}}{\text{Pr}\left[R_{i}\right]\mathcal{W}(T-T_{S})^{2}}-\frac{\Delta(\lambda)^{-T_{S}\mathcal{W}}\mathcal{D}^{\star}\text{ln}(\Delta(\lambda))}{\text{Pr}\left[R_{i}\right](T-T_{S})}\right)^{2}\\
&+\text{ln}(2)\Delta(\lambda)^{-T_{S}\mathcal{W}}\left(\frac{2 \mathcal{D}^{\star}}{\text{Pr}\left[R_{i}\right]\mathcal{W}(T-T_{S})^{3}}-\frac{2 \mathcal{D}^{\star}\text{ln}(\Delta(\lambda))}{\text{Pr}\left[R_{i}\right](T-T_{S})^{2}}+\frac{\mathcal{D}^{\star}\mathcal{W}\text{ln}^{2}(\Delta(\lambda))}{\text{Pr}\left[R_{i}\right](T-T_{S})}\right)\vast\}\geq 0.
\end{align}
Thus, both the objective and constraint functions are convex with respect to $T_{S}$, which implies that the Karush-Kuhn-Tucker (KKT) conditions are necessary and sufficient for the optimal solution, whereas a unique minimum value exists \cite{ref14}. Motivated by the convexity of the problem, we introduce the following Lagrangian multiplier, termed $\mu$, into the equation:
\begin{align}
\mathcal{L}\triangleq \overline{E}^{(i)}_{\text{total}}-\mu\mathcal{D}'^{(i)}, \ \mu\geq 0
\label{langr}
\end{align}
and we set
\begin{align}
\frac{\partial \mathcal{L}}{\partial T_{S}}=0.
\label{langr1}
\end{align}
Solving this system, it follows that
\begin{align}
\nonumber
&\mu=\frac{2^{-\frac{\Delta(\lambda)^{-T_{S}\mathcal{W}}\mathcal{D}^{\star}}{(T-T_{S})\mathcal{W}\text{Pr}\left[R_{i}\right]}}\Delta(\lambda)^{T_{S}\mathcal{W}}\text{Pr}\left[R_{i}\right](T-T_{S})^{2}\mathcal{W}}{\mathcal{D}^{\star}\text{ln}(2)(1-(T-T_{S})\mathcal{W}\text{ln}(\Delta(\lambda)))}\\
&\times \left(\tilde{E}_{H,i}+2\overline{E}_{S}T_{S}\mathcal{W}+\overline{E}_{R,i}T_{R}\mathcal{W}-\Delta(\lambda)^{T_{S}\mathcal{W}}(\tilde{E}_{H,i}+\overline{E}_{T,i}\text{Pr}\left[R_{i}\right])(1-(T-T_{S})\mathcal{W}\text{ln}(\Delta(\lambda)))\right).
\end{align}
Since $\mu \geq 0$, the following necessary condition for optimally minimizing the average energy consumption yields as:
\begin{align}
\tilde{E}_{H,i}+\overline{E}_{T,i}\text{Pr}\left[R_{i}\right]\leq \frac{\Delta(\lambda)^{-T^{\star}_{S}\mathcal{W}}(\tilde{E}_{H,i}+2\overline{E}_{S}T^{\star}_{S}\mathcal{W}+\overline{E}_{R,i}T_{R}\mathcal{W})}{1-(T-T^{\star}_{S})\mathcal{W}\text{ln}(\Delta(\lambda))}
\label{m1large}
\end{align}
where $T^{\star}_{S}$ denotes the optimal sensing time.

Interestingly, the instantaneous data rate does not affect the latter condition for the efficiency on the energy consumption since (\ref{m1large}) is independent of $\gamma^{(\mathcal{T})}_{e2e,i}$. Obviously, the detection probability (obtained from (\ref{pdclosed}) in a closed-form) plays a key role to the overall performance of the secondary system and to EE. Also, notice that for a sufficiently high detection probability (i.e., very low miss probability), $\Delta(\lambda)\rightarrow 0^{+}$, which implies that the right-hand side (RHS) of (\ref{m1large}) takes extremely high values. In turn, for $\Delta(\lambda)\rightarrow 1^{-}$, RHS of (\ref{m1large}) maintains low values. Hence, (\ref{m1large}) is more likely to occur for an increased detection efficiency and vice versa. In other words, from a green communications viewpoint, the harvested energy is greatly enhanced as compared to the energy consumed for data transmission, as long as a robust and quite accurate detection scheme is preserved in the considered system.

Overall, it is important to mention that in practical applications certain licensed spectrum bands allocated for primary users may be idle quite often (the so-called \emph{spectrum hole} effect \cite{refsphole}). In this case, the secondary system would operate in consecutive time-frames, utilizing the previously proposed opportunistic relay scheduling scheme. Hence, it is highly possible that one or more secondary relays may reach a minimum energy level, defined as $\overline{E}^{(\min)}_{\text{total}}$, because the system enters more the transmission phase rather than the harvesting phase at this time period. The value of $\overline{E}^{(\min)}_{\text{total}}$ can be considered as an application-dependent fixed parameter and/or predetermined by the system manufacturer. As an illustrative example, consider the scenario when the system realizes a spectrum hole and the $j$th secondary relay has reached $\overline{E}^{(\min)}_{\text{total}}$. Then, upon the next data transmission phase, the secondary receiver utilizes (\ref{m1large}) for the $j$th relay. If this condition is satisfied, the $j$th relay takes place in the following relay selection process. On the other hand, if the latter condition is not satisfied, the secondary receiver excludes the $j$th relay for a potential selection of the data transmission phase. Thus, the analytical formulation of $\text{Pr}\left[R_{i}\right]$ (with $1\leq i\leq M$ and $i\neq j$) is provided back in (\ref{prinid}) by substituting $M$ with $M-1$ to denote the absence of the $j$th relay during the selection process.

\section{Numerical Results}
\label{Section V}
In this section, numerical results are presented and cross-compared with MC simulations to assess our theoretical findings. Each MC run was conducted over $10^{6}$ RV trials. There is a perfect match between all the analytical and the respective simulation results and, therefore, the accuracy of the presented analysis is verified. Henceforth, for notational simplicity and without loss of generality, we assume a common time correlation coefficient, defined as $\rho$. Moreover, $\eta=0.35$, $N_{0}=-131$dBm, $\mathcal{W}=1$MHz, the system carrier frequency is $2.5$GHz, while, as previously mentioned, the number of sensing samples is defined as $U=T_{S}\mathcal{W}$. Also, the path-loss exponent is assumed fixed as $\alpha=4$, corresponding to a classical macro-cell urban environment \cite[Table 2.2]{ref15}. In addition, $\text{Pr}[\theta_{j}=1]=\text{Pr}[\theta_{j}=0]=0.5$ is assumed, without loss of generality, which is in agreement with other research works, e.g., \cite{ref13,refnewbjornson}. All the included link distances are normalized with a reference distance equal to $1$km. In addition, for clarity reasons, we assume that the distance between the $l$th primary node and secondary source equals the distance between the $l$th primary node and $i$th relay and the distance between the $l$th primary node and destination, i.e., $d_{P_{l},S}=d_{P_{l},R_{i}}=d_{P_{l},D}\triangleq d_{P_{l}}$ $\forall \:\:l,i$. In what follows (owing to the non-identical statistics of the included nodes), we consider the following link-distance scenarios for the primary nodes; for $L=1$ let $d_{P_{1}} \in \mathbb{R}^{+}$ (in km), while for $L>1$ it is assumed that
\begin{align}
d_{P_{l+1}}\triangleq d_{P_{l}}+0.01 \ \ \forall l \in \{1,L-1\}.
\label{dp}
\end{align}
Following similar lines of reasoning, when the link distances per hop for the secondary nodes are non-identical, it is assumed that 
\begin{align}
\nonumber
&d_{S,R_{i+1}}\triangleq d_{S,R_{i}}+0.005,\\
&d_{R_{i+1},D}\triangleq d_{R_{i},D}+0.005 \ \ \forall i \in \{1,M-1\}.
\label{dr}
\end{align} 

Figure \ref{fig3} illustrates the detection probability for various distances between primary and secondary nodes. Its performance is worse for higher $\lambda$ threshold values and/or the existence of fewer primary nodes, as expected. This occurs due to the fact that when more primary nodes are placed in the vicinity of the secondary nodes, transmitting using a relatively high power, their active presence is more likely to be detected and vice versa. Scanning the open technical literature, a generally accepted target on the detection probability is found to be $P^{\star}_{d}(\lambda)=90\%$ (e.g., see \cite{ref1544}). Then, as an illustrative example, when primary nodes keep a minimum distance of 0.4km from secondary relays and/or destination, the latter target is satisfied for the given system settings. 
\begin{figure}[!t]
\centering
\includegraphics[trim=2cm 0.2cm 3cm 1cm, clip=true,totalheight=0.3\textheight]{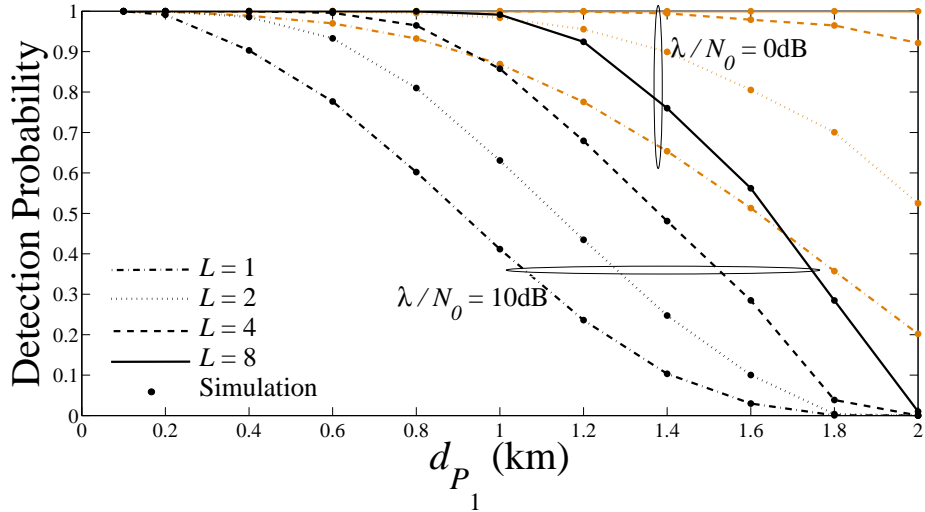}
\caption{$P_{d}(\lambda)$ vs. $d_{P_{1}}$. Recall (\ref{dp}), for the scenarios when $L>1$. Also, $M=1$, $U=200$, $d_{R,D}=0.1$km, $\frac{Q}{N_{0}}=2$dB, and $\{\frac{P_{\text{max}}}{N_{0}},\frac{p_{p}}{N_{0}}\}=10$dB.}
\label{fig3}
\end{figure}

In Fig. \ref{fig4}, outage performance of the secondary system is depicted for various $P_{\text{max}}$ values. It is worth noting that the diversity order is always one regardless of the number of relays, which is in agreement with \cite{ref19}. Moreover, when very rapidly varying fading channels are present (e.g., $\rho=0.1$), adding more relays does not alter the coding gain either (in fact, there is quite a marginal performance difference, which can be considered as negligible). Hence, the overall outage performance, i.e., both the coding and diversity gains, cannot be enhanced by maximizing $M$ in such environments. On the other hand, coding gain is improved by adding more relays into the system, when semi-constant channel fading conditions are realized (e.g., $\rho=0.9$). 

\begin{figure}[!t]
\centering
\includegraphics[trim=1.8cm 0.2cm 3cm 0cm, clip=true,totalheight=0.3\textheight]{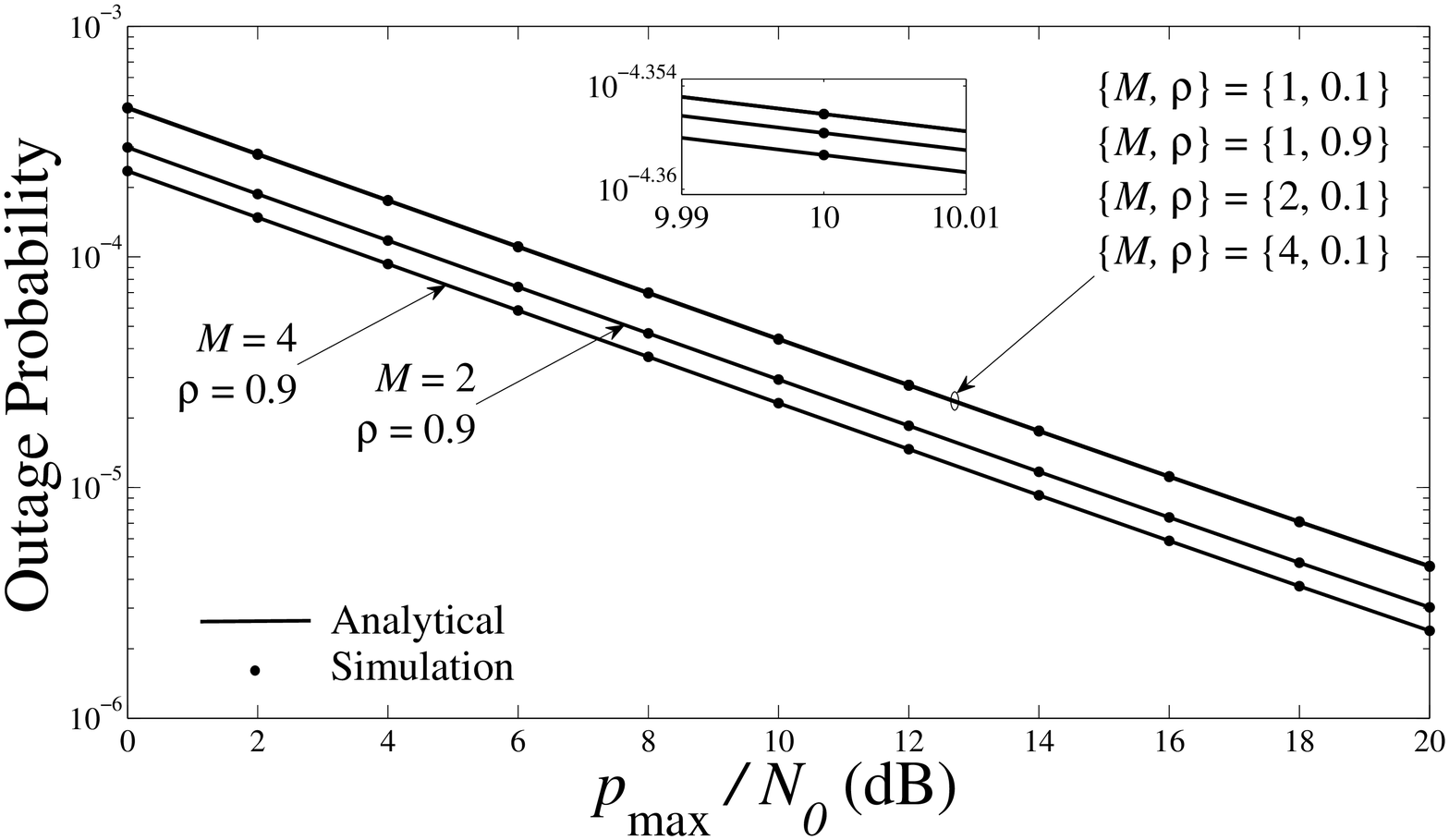}
\caption{$P_{\text{out}}(\gamma_{\text{th}})$ vs. $p_{\text{max}}/N_{0}$. Also, $L=2$, $d_{S,R}=d_{R,D}=0.1$km, $d_{P_{1},R}=0.3$km, $d_{P_{1},D}=0.4$km, $\{\frac{\lambda}{N_{0}},\frac{\gamma_{\text{th}}}{N_{0}}\}=3$dB, $\frac{Q}{N_{0}}=6$dB, and $\frac{p_{p}}{N_{0}}=30$dB.}
\label{fig4}
\end{figure}

Figures \ref{fig6} \ref{fig7}, and \ref{fig8} are devoted to the average energy consumed by using the proposed opportunistic strategy, from a green communications perspective. To this end, $\overline{E}_{\text{total}}$ is numerically evaluated using (\ref{totale1}) under different distances $d_{P_{1}}$ in Fig. \ref{fig6}. Obviously, there is an emphatic energy gain (on average) for each secondary node as this distance is relatively small (e.g., when $d_{P_{1}}<0.4$km). Also, EE is greatly enhanced for higher $L$ values (i.e., more primary nodes). This is a reasonable outcome since the overall average harvested energy is increased in such a scenario. In general, it can be seen that the average harvested energy is higher than the corresponding energy that is consumed for sensing, reporting and (potential) transmitting. This beneficial phenomenon stops holding only for the case when primary nodes are rather far-distant (e.g., when $d_{P_{1}}>1.1$km).

\begin{figure}[!t]
\centering
\includegraphics[trim=1.8cm 0.2cm 2.5cm 0cm, clip=true,totalheight=0.3\textheight]{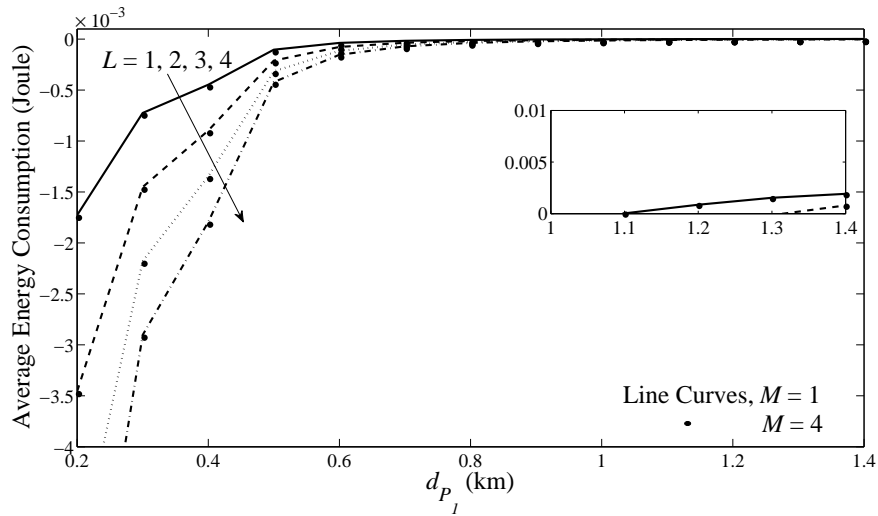}
\caption{$\overline{E}_{\text{total}}$ vs. various distances of the primary node-1 and the secondary system (i.e., $d_{P_{1}}$). Also, i.i.d. statistics between the $S-R$ and $R-D$ links are considered with $d_{R,D}=0.1$km and four relays (i.e., $M=4$). Other parameters are fixed as: $P_{\text{max}}=p_{p}=20$dBm, $\lambda=Q=17$dBm, $P_{T_{x}}=10$dBm, $P_{R_{x}}=9$dBm, $T_{\text{total}}=100$msec, $T_{R}=1$msec, and $T_{S}=20$msec (thus $T_{D}=79$msec).}
\label{fig6}
\end{figure}

\begin{figure}[!t]
\centering
\includegraphics[trim=0.2cm 0.2cm 0cm 0cm, clip=true,totalheight=0.3\textheight]{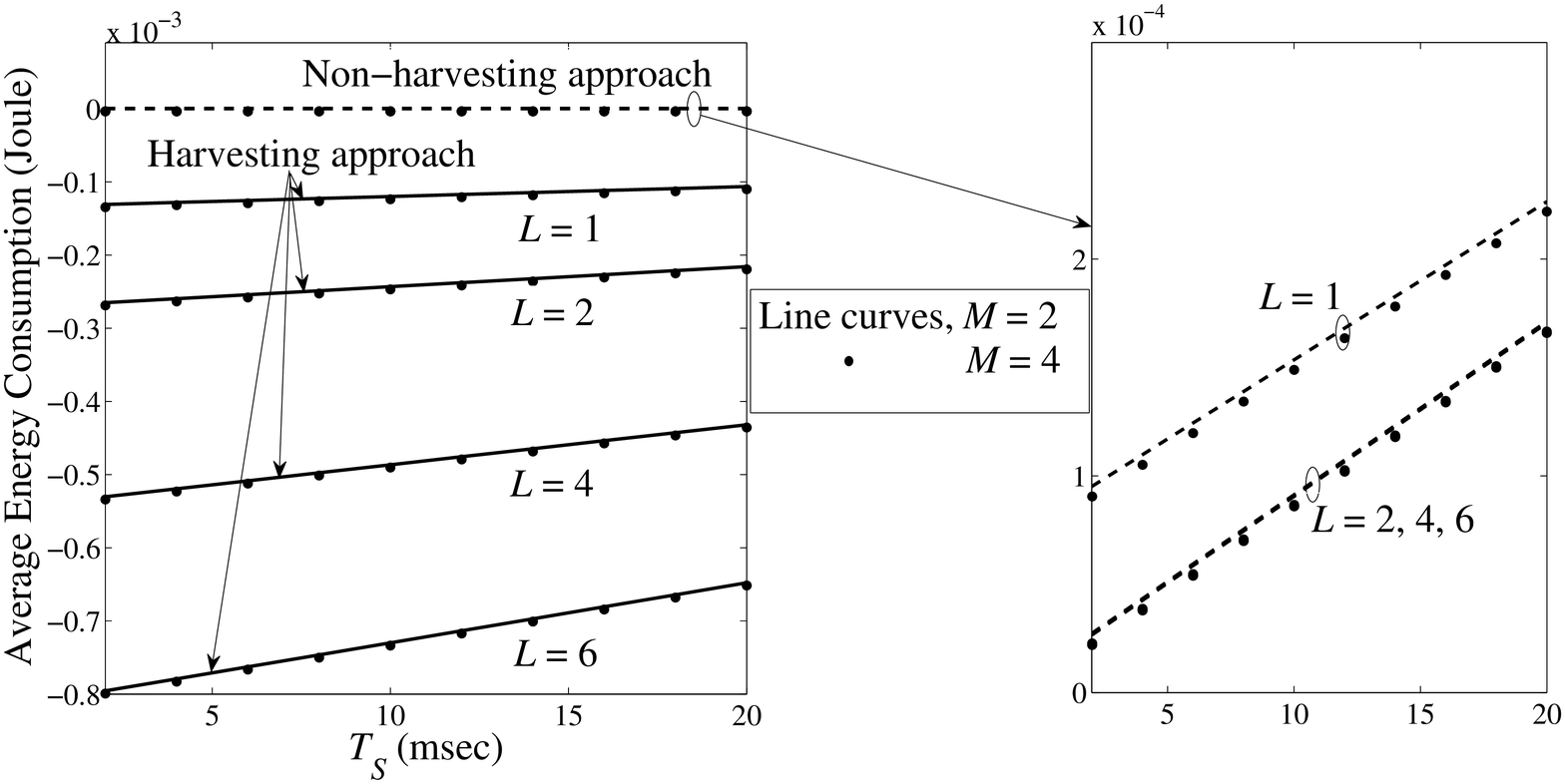}
\caption{$\overline{E}_{\text{total}}$ vs. various durations of the sensing time $T_{S}$. Also, $d_{P_{1}}=0.4$km. Other system parameters are the same as in Fig. \ref{fig6}.}
\label{fig7}
\end{figure}

\begin{figure}[!t]
\centering
\includegraphics[trim=1.8cm 0.2cm 2.5cm 0cm, clip=true,totalheight=0.3\textheight]{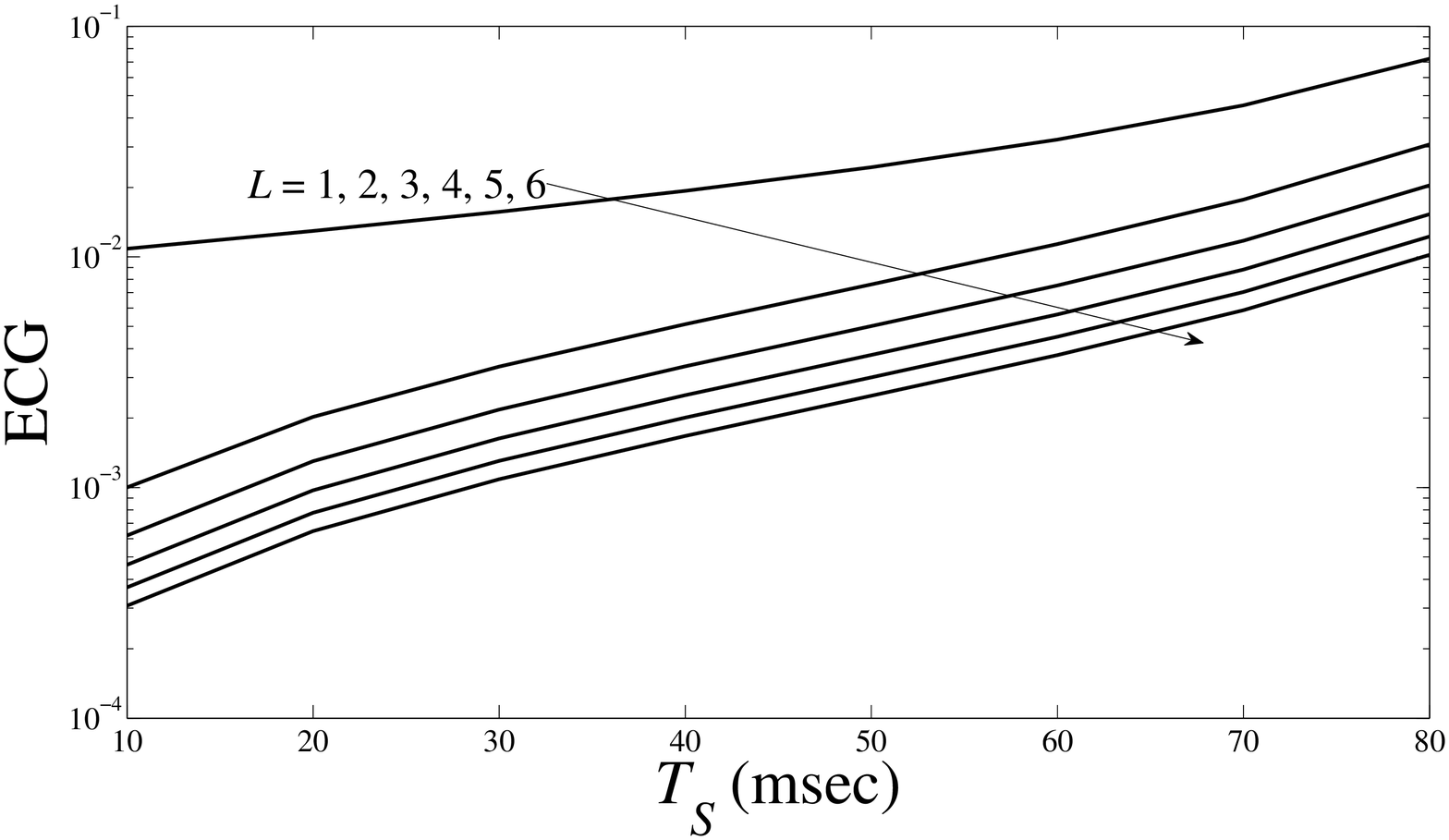}
\caption{ECG vs. various durations of the sensing time $T_{S}$. Also, $d_{P_{1}}=0.5$km, $d_{S,R_{1}}=d_{R_{1},D}=0.2$km, and $M=1$. Other system parameters are the same as in Fig. \ref{fig6}.}
\label{fig8}
\end{figure}

In Fig. \ref{fig7}, $\overline{E}_{\text{total}}$ is numerically evaluated for various durations of the sensing time $T_{S}$. Two different approaches are cross-compared; the proposed hybrid one (entitled as harvesting approach) and a suboptimal approach which does not support harvesting (entitled as non-harvesting approach). The latter approach, supports either secondary data transmission or an idle state for the included nodes in the case of primary detection. Hence, the lack of energy harvesting manifests itself since the right-most depiction of Fig. \ref{fig7} reveals a certain amount of consumed energy. On the other hand, the negative scale of $\overline{E}_{\text{total}}$ at the left-most depiction describes an overall EE and an opportunistically power-saving system setup.

An insightful observation obtained from Figs. \ref{fig6} and \ref{fig7} is the fact that the presence of more or less relays at the secondary system does not dramatically affect the average energy consumption. This occurs because when the system enters into the harvesting phase, upon the detection of primary transmission(s), all the included relays switch to the harvesting mode. Since the harvested energy is the dominant factor for EE and power savings, the derived results are quite straightforward. As previously noticed, performance gain from the introduction of more relays into the system appears only at the transmission stage of the secondary communication.

In Fig. \ref{fig8}, the energy consumed for sensing, reporting and transmitting data (given the transmission phase) is evaluated with respect to the harvested energy (given the harvesting phase). To this end, the energy-consumption gain is introduced, namely, ECG, which is defined, using (\ref{totale1}), as
\begin{align}
\text{ECG}\triangleq \frac{\overline{E}_{S}T_{S}+\overline{E}_{R,i}T_{R}T_{S}\mathcal{W}+(1-P_{d}(\lambda))\text{Pr}\left[R_{i}\right]\overline{E}_{T,i}T_{D}}{P_{d}(\lambda)\tilde{E}_{H,i}T_{D}}.
\end{align}
As can be seen from Fig. \ref{fig8}, the presence of two or more primary nodes provides an emphatic difference on power savings as compared to the rather overoptimistic scenario with only one primary node (i.e., when $L=1$). This occurs not only due to the maximization of power collected during the harvesting phase, but also due to quite an enhanced detection probability for $L\geq 2$.

Finally, Table \ref{Table} illustrates the optimal $T_{S}$ (which can be obtained with the aid of several numerical optimization tools\footnote{For instance, a global optimization algorithm that can be employed for numerical estimation of $T_{S}$ is the \emph{NMinimize} function within the MATHEMATICA$^{\text{TM}}$ software package.}), given the optimization problem in (\ref{optpro}) with its involved constraints. It can be readily seen that $T_{S}$ is greatly minimized for higher $L$ and $M$ values, especially when $M\geq 2$ and $L \geq 3$. This is a natural outcome gained from the spatial diversity of cooperative sensing by many relay nodes and, thus, from the rather efficient detection probability within a minimum sensing time.
\begin{table}[!t]
\caption{Optimal Sensing Time $T_{S}^{\star}$ (in sec) Obtained From (\ref{optpro})}
\label{Table}
\begin{center}
\begin{tabular}{l|| l l l l}\hline
\backslashbox{$M$}{$L$} & 1& 2& 3& 4\\\hline
 1 & 0.0873& 0.0815& 0.0658& 0.000115\\
 2 & 0.0873& 0.0815& 0.00372& 0.000115\\
 3 & 0.0873& 0.0815& 0.00372& 0.000115\\
 4 & 0.0873& 0.0815& 0.00370& 0.000106\\\hline
\end{tabular}
\end{center}
*$\{\frac{\lambda}{N_{0}},\frac{Q}{N_{0}}\}=7$dB, and $\{\frac{P_{\text{max}}}{N_{0}},\frac{p_{p}}{N_{0}}\}=30$dB, $d_{S,R_{1}}=d_{R_{1},D}=0.5$km, $d_{P_{1},R_{1}}=d_{P_{1},D}=1$km, $T=100$msec, $T_{R}=1$msec, and $R=100$Kbps.
\end{table}

\section{Conclusions}
\label{Section VI}
A new green cognitive dual-hop relaying system with multiple AF relays was proposed and its performance is analytically investigated. The novelty of this work relied on the introduction of a hybrid mode of operation for the involved secondary nodes, which opportunistically switch between data transmission and energy harvesting, correspondingly, upon the detection of primary network activity. New closed-form expressions regarding important system performance metrics were obtained over Rayleigh channel fading conditions, such as the detection probability, outage probability, and average harvested energy of each secondary node. Capitalizing on the derived expressions, the energy conservation of secondary nodes was further investigated, by modeling and analyzing the overall energy consumption minimization problem. A necessary and sufficient optimality condition for the proposed opportunistic strategy was presented, while some useful engineering insights were manifested. 

A natural extension of the proposed scheme could be the analytical performance evaluation and/or optimization when the secondary nodes utilize multi-band spectrum sensing/spectrum operation, which represents a challenging topic for future research.

\appendix
\subsection{Derivation of (\ref{fg12report}), (\ref{Uclosedform}), (\ref{prdreport}), and (\ref{Eq})}
\label{appa}
\numberwithin{equation}{subsection}
\setcounter{equation}{0}
CDF of the $e2e$ SNR for the $i$th secondary relay reads as \cite[Eq. (15)]{ref6}
\begin{align}
F^{(i)}_{\gamma^{(\mathcal{R})}_{e2e,i}}(x)=\int^{\infty}_{0}F_{\gamma^{(\mathcal{R})}_{1,i}}\left(x+\frac{\mathcal{U}^{(\mathcal{R})}x}{y}\right)f_{\gamma^{(\mathcal{R})}_{2,i}}(y)dy.
\label{Fge2ereport}
\end{align}
Based on (\ref{g12report}), $F_{\gamma^{(\mathcal{R})}_{1,i}}(x)$ can be directly obtained from (\ref{Fgpd}), by substituting $D$ with $R_{i}$. Additionally, $f_{\gamma^{(\mathcal{R})}_{2,i}}(x)$ stems as in (\ref{snrpdf}) by substituting $p_{i,j}$ and $\bar{\gamma}_{i,j}$ with $p^{(\mathcal{R})}_{R_{i},D}$ and $\bar{\gamma}_{R_{i},D}$, respectively.
Therefore, utilizing \cite[Eq. (3.471.9)]{tables} into (\ref{Fge2ereport}), (\ref{fg12report}) can be easily extracted.

Further, (\ref{U}) can be rewritten as
\begin{align}
\mathcal{U}^{(\mathcal{R})}_{i}=\left(\int^{\infty}_{0}(x+1)^{-1}f_{\gamma^{(\mathcal{R})}_{1,i}}(x)dx\right)^{-1}
\label{Uint}
\end{align}
while $f_{\gamma^{(\mathcal{R})}_{1,i}}(\cdot)$ is directly obtained from (\ref{fgpd}) by substituting subscript $D$ with $R_{i}$. Hence, utilizing \cite[Eq. (2.3.4.2)]{ref7} into (\ref{Uint}), (\ref{Uclosedform}) is obtained.

Regarding the derivation of (\ref{prdreport}), referring back to (\ref{preport}), we have that
\begin{align}
f_{\gamma^{(\mathcal{R})}_{2,i}}(x)=
\left\{
\begin{array}{c l}     
    \frac{N_{0}\exp\left(-\frac{N_{0}x}{P_{\text{max}}\bar{\gamma}_{R_{i},D}}\right)}{P_{\text{max}}\bar{\gamma}_{R_{i},D}}, & \mathbb{E}[q_{R_{i}}]<\frac{Q}{P_{\text{max}}},\\
    & \\
    \frac{N_{0}\mathbb{E}[q_{R_{i}}]\exp\left(-\frac{N_{0}\mathbb{E}[q_{R_{i}}]x}{Q\bar{\gamma}_{R_{i},D}}\right)}{Q\bar{\gamma}_{R_{i},D}}, & \mathbb{E}[q_{R_{i}}]>\frac{Q}{P_{\text{max}}}.
\end{array}\right.
\label{fg2reporttt}
\end{align}
Hence, it yields that
\begin{align}
\nonumber
F_{\gamma^{(\mathcal{R})}_{2,i}}(x)&=1-\left(1-F_{\gamma^{(\mathcal{R})}_{2,i}|P_{\text{max}}}(x)\right)\left(1-F_{\gamma^{(\mathcal{R})}_{2,i}|\frac{Q}{\mathbb{E}[q_{R_{i}}]}}(x)\right)\\
&=1-\exp\left(-\frac{N_{0}\left(\frac{1}{P_{\text{max}}}+\frac{\mathbb{E}[q_{R_{i}}]}{Q}\right)x}{\bar{\gamma}_{R_{i},D}}\right).
\label{Fg2reporttt}
\end{align}
By differentiating (\ref{Fg2reporttt}), the corresponding (unconditional) PDF of $\gamma^{(\mathcal{R})}_{2,i}$ is formed as in (\ref{snrpdf}) with the yielded transmission power $p^{(\mathcal{R})}_{R_{i},D}$ defined in (\ref{prdreport}).

Finally, since $\mathbb{E}[q_{R_{i}}]\triangleq \int^{\infty}_{0}xf_{q_{R_{i}}}(x)dx$, while based on (\ref{q}) and \cite[Eq. (15)]{ref8}, it holds that
\begin{align}
f_{q_{R_{i}}}(x)=\sum^{L}_{l=1}\sum^{L}_{k=0}\underbrace{\sum^{L}_{n_{1}=1}\cdots \sum^{L}_{n_{k}=1}}_{n_{1}\neq \cdots \neq n_{k}\neq k}\frac{(-1)^{k}}{k!\bar{\gamma}_{R_{i},P_{l}}}\exp\left(-\left(\frac{1}{\bar{\gamma}_{R_{i},P_{l}}}+\sum^{k}_{t=1}\frac{1}{\bar{\gamma}_{R_{i},P_{n_{t}}}}\right)x\right).
\label{fq}
\end{align}
Thus, after some simple algebra, (\ref{Eq}) arises. 

\subsection{Derivation of (\ref{cdftrans})}
\label{appb}
\numberwithin{equation}{subsection}
\setcounter{equation}{0}
First, the transmission power for the $S-R$ and $R-D$ links are determined, accordingly, based on (\ref{pstrans}) and (\ref{ptrans}). Similar to the derivation of (\ref{prdreport}), $p_{S,R_{s}}$ and $p^{(\mathcal{T})}_{R_{s},D}$ are obtained in (\ref{psrtrans}) and (\ref{prdtrans}), respectively. Moreover, the $S-R$ link follows a conditional (given the selected relay $R_{s}$) exponential distribution yielding
\begin{align}
F^{(\mathcal{T})}_{\gamma_{1,s}}(x)=1-\exp\left(-\frac{N_{0}x}{p_{S,R_{s}}\bar{\gamma}_{S,R_{s}}}\right).
\label{cdfg1trans}
\end{align}

Regarding the second-hop of the transmission phase, recall that $\gamma^{(\mathcal{T})}_{R_{s},D}$ and $\hat{\gamma}^{(\mathcal{T})}_{R_{s},D}$ have correlated exponential distributions with a corresponding conditional PDF given by \cite[Eq. (31)]{ref1}
\begin{align}
\nonumber
&f_{\gamma^{(\mathcal{T})}_{R_{s},D}|\hat{\gamma}^{(\mathcal{T})}_{R_{s},D}}(x|y)=\\
&\frac{\exp\left(-\frac{x+\rho^{2}_{s}y}{(1-\rho^{2}_{s})p_{R_{s},D}\gamma_{R_{s},D}}\right)}{(1-\rho^{2}_{s})p_{R_{s},D}\gamma_{R_{s},D}}I_{0}\left(\frac{2\rho_{s}\sqrt{x y}}{(1-\rho^{2}_{s})p_{R_{s},D}\gamma_{R_{s},D}}\right).
\label{joint}
\end{align}
From the theory of concomitants of ordered statistics \cite[\S 6.8]{reforderedstats}, it holds that
\begin{align}
f_{\gamma^{(\mathcal{T})}_{R_{s},D}}(x)=\int^{\infty}_{0}f_{\gamma^{(\mathcal{T})}_{R_{s},D}|\hat{\gamma}^{(\mathcal{T})}_{R_{s},D}}(x|y)f_{\hat{\gamma}^{(\mathcal{T})}_{R_{s},D}}(y)dy.
\label{uncond}
\end{align} 
Thus, since $R_{s}$ is selected based on maximizing the SNR of the relay-to-destination link, $f_{\hat{\gamma}^{(\mathcal{T})}_{R_{s},D}}(\cdot)$ becomes
\begin{align}
f_{\hat{\gamma}^{(\mathcal{T})}_{R_{s},D}}(y)=\Psi_{M}\exp(-\Phi_{M}y),
\label{fmax}
\end{align} 
by following similar lines of reasoning as for the derivation of (\ref{fq}). Recall that $\Psi_{M}$ and $\Phi_{M}$ are defined in (\ref{cdftrans}). Substituting (\ref{joint}) and (\ref{fmax}) into (\ref{uncond}), while utilizing \cite[Eq. (2.15.5.4)]{ref10}, it holds that
\begin{align}
f_{\gamma^{(\mathcal{T})}_{R_{s},D}}(x)=\Xi_{M}\exp(-Z_{M}x).
\label{uncond1}
\end{align} 
Further, using the integral in (\ref{Fge2ereport}), but plugging (\ref{cdfg1trans}) and (\ref{uncond1}), (\ref{cdftrans}) can be extracted. Finally, 
\begin{align*}
\mathcal{U}^{(\mathcal{T})}_{i}=\int^{\infty}_{0}(x+1)^{-1}f^{(\mathcal{T})}_{\gamma_{1,s}}(x)dx
\end{align*}
with $f^{(\mathcal{T})}_{\gamma_{1,s}}(\cdot)$ being the exponential PDF, yielding (\ref{utranss}).

\subsection{Derivation of (\ref{pouttrans}) and (\ref{pouttransiid})}
\label{appc}
\numberwithin{equation}{subsection}
\setcounter{equation}{0}
The unconditional outage probability of the received $e2e$ SNR in (\ref{pouttrans}) can be obtained by averaging  (\ref{cdftrans}) over all the relay selection probabilities, yielding
\begin{align}
P^{(\text{i.n.i.d.})}_{\text{out}}(\gamma_{\text{th}})=\sum^{M}_{i=1}\text{Pr}\left[R_{i}\right]F_{\gamma^{(\mathcal{T})}_{e2e,i}}(\gamma_{\text{th}}),
\label{pouttrans111}
\end{align}
where $\text{Pr}\left[R_{i}\right]$ is defined in (\ref{totale}). Suppose that $R_{i}$ is selected, which means that $\gamma_{R_{i},D}$ is the largest SNR among all $R-D$ links. Let $R_{v}=\max\{\gamma_{R_{j},D}\}^{M-1}_{j=1}$, except the selected one (i.e., $\gamma_{R_{i},D}> \gamma_{R_{v},D}$). Then, $\text{Pr}\left[R_{i}\right]$ can be written as
\begin{align}
\nonumber
\text{Pr}\left[R_{i}\right]&\triangleq \text{Pr}\left[\max\{\gamma_{R_{1},D},\ldots,\gamma_{R_{M-1},D}\}\leq \gamma_{R_{v},D}|\gamma_{R_{i},D}\right]\\
&=\int^{\infty}_{0}F_{\gamma_{R_{v},D}}(x)f_{\gamma_{R_{i},D}}(x)dx
\label{pri}
\end{align} 
where $F_{\gamma_{R_{v},D}}(\cdot)$ is the CDF of the maximum of $M-1$ i.n.i.d. exponential RVs. Based on (\ref{uncond1}), but substituting $M$ with $M-1$, it holds that
\begin{align}
F_{\gamma_{\gamma_{R_{v},D}}}(x)=1-\frac{\Xi_{M-1}}{Z_{M-1}}\exp(-Z_{M-1}x).
\label{uncond1cdf}
\end{align}  
Hence, substituting (\ref{uncond1cdf}) and (\ref{snrpdf}) into (\ref{pri}), (\ref{pouttrans}) is directly obtained.

In the special case of i.i.d. Rayleigh fading channels for the secondary system, we drop index-$i$ to ensure identical statistics per hop. To this end, the selection of each relay is equiprobable, yielding
\begin{align}
\text{Pr}\left[R_{i}\right]=\frac{1}{M}.
\label{priid}
\end{align}
Therefore, (\ref{pouttrans}) becomes (\ref{pouttransiid}), thus, completing the proof. 

\subsection{Numerical Solution of $\mathcal{K}_{\mathcal{R}}$}
\label{appd}
\numberwithin{equation}{subsection}
\setcounter{equation}{0}
When the channel gain conditions of the first hop result to a violation of (\ref{modgainreq}), the amplifier is clipped to $\mathcal{K}_{\mathcal{R}}$. Consequently, similar to \cite[Eq. (19)]{ref40}, the modified fixed gain of the $i$th relay is modeled as
\begin{align}
&\tilde{G}^{2}_{\mathcal{R},i} \triangleq \left\{
\begin{array}{c l}     
    \left(\mathcal{U}^{(\mathcal{R})}_{i}\right)^{-1}, & \gamma^{(\mathcal{R})}_{{1},i}\leq \mathcal{T}_{i}\\
    & \\
    \frac{1}{\gamma^{(\mathcal{R})}_{1,i}+1}, & \text{otherwise}
\end{array}\right.
\label{gmod}
\end{align} 
where $\mathcal{T}_{i}\triangleq \frac{\mathcal{K}_{\mathcal{R}}\mathcal{U}^{(\mathcal{R})}_{i}}{N_{0}}-1$ is a threshold value in terms of the received SNR at the $i$th relay, given directly from (\ref{modgainreq}), satisfying that $G^{2}_{\mathcal{R},i}(p_{p}\sum^{L}_{j=1}\theta_{j}|g_{P_{j},R_{i}}|^{2}+N_{0})= \mathcal{K}_{\mathcal{R}}$. Then, averaging (\ref{gmod}), the (unconditional) modified fixed gain yields as
\begin{align}
\nonumber
&\tilde{G}^{2}_{\mathcal{R},i}=\left(\mathcal{U}^{(\mathcal{R})}_{i}\right)^{-1}F^{(\mathcal{R})}_{\gamma_{1},i}(\mathcal{T})+\int^{\infty}_{\mathcal{T}}\left(x+1\right)^{-1}f^{(\mathcal{R})}_{\gamma_{1},i}(x)dx\\
&=\frac{F^{(\mathcal{R})}_{\gamma_{1},i}\left(\frac{\mathcal{K}_{\mathcal{R}}\mathcal{U}^{(\mathcal{R})}_{i}}{N_{0}}-1\right)}{\mathcal{U}^{(\mathcal{R})}_{i}}-\sum^{L}_{r=1}f_{r}\sum^{r}_{k=1}\left(\prod^{k}_{\begin{subarray}{c}j=1\\j\neq k\end{subarray}}\frac{1}{\bar{\gamma}_{P_{k},R_{i}}-\bar{\gamma}_{P_{j},R_{i}}}\right) \exp\left(\frac{N_{0}}{p_{p}\bar{\gamma}_{P_{k},R_{i}}}\right)\text{Ei}\left(-\frac{\mathcal{K}_{\mathcal{R}}\mathcal{U}^{(\mathcal{R})}_{i}}{p_{p}\bar{\gamma}_{P_{k},R_{i}}}\right)
\label{avggainfinal}
\end{align} 
where the second equality arises by utilizing \cite[Eq. (3.352.2)]{tables}.

Finally, since $G^{2}_{\mathcal{R},i}=1/(\mathcal{U}^{(\mathcal{R})}_{i}N_{0})$, $\mathcal{K}_{\mathcal{R}}$ can be numerically calculated by equating (\ref{avggainfinal}) with the reciprocal of (\ref{Uclosedform}), resulting to the following transcendental expression
\begin{align}
\nonumber
&\sum^{L}_{r=1}f_{r}\sum^{r}_{k=1}\left(\prod^{k}_{\begin{subarray}{c}j=1\\j\neq k\end{subarray}}\frac{1}{\bar{\gamma}_{P_{k},D}-\bar{\gamma}_{P_{j},D}}\right)\bar{\gamma}_{P_{k},D}\exp\left(-\frac{N_{0}\left(\frac{\mathcal{K}_{\mathcal{R}}\mathcal{U}^{(\mathcal{R})}_{i}}{N_{0}}-1\right)}{p_{p}\bar{\gamma}_{P_{k},D}}\right)\\
&+\mathcal{U}^{(\mathcal{R})}_{i}\sum^{L}_{r=1}f_{r}\sum^{r}_{k=1}\left(\prod^{k}_{\begin{subarray}{c}j=1\\j\neq k\end{subarray}}\frac{1}{\bar{\gamma}_{P_{k},R_{i}}-\bar{\gamma}_{P_{j},R_{i}}}\right)\exp\left(\frac{N_{0}}{p_{p}\bar{\gamma}_{P_{k},R_{i}}}\right)\text{Ei}\left(-\frac{\mathcal{K}_{\mathcal{R}}\mathcal{U}^{(\mathcal{R})}_{i}}{p_{p}\bar{\gamma}_{P_{k},R_{i}}}\right)-1+\frac{1}{N_{0}}=0
\label{numericalkr}
\end{align}
satisfying a fixed gain that avoids saturation of the amplifier.

\ifCLASSOPTIONcaptionsoff
  \newpage
\fi

\end{document}